\newcommand{\kera}[1]{\ket{#1}\bra{#1}}
\newcommand{\Adv}[2]{\text{\textrm{Adv}}^{#1}_{#2}}
\newcommand{\Game}{\text{\textrm{Game}}}
\renewcommand{\:}{\,:\,}
\newcommand{\bit}{\{0,1\}}
\newcommand{\Meas}{\ensuremath{\mathcal{M}}}
\newcommand{\negl}{\text{\it{negl}}}
\newcommand{\poly}{\text{\it{poly}}}
\newcommand{\from}{\leftarrow}
\newcommand{\PPT}{\textsc{ppt}\xspace}
\newcommand{\QPT}{\textsc{qpt}\xspace}
\newcommand{\RO}{\text{\textrm{RO}}}
\newcommand{\dom}{\text{\it{dom}}}
\newcommand{\IV}{\text{\textrm{IV}}}
\newcommand{\trunc}{\text{\textrm{trunc}}}
\newcommand{\ceil}[1]{\left\lceil#1\right\rceil}
\newcommand{\Id}{\ensuremath{\mathbb{I}}}
\newcommand{\CNOT}{\text{\textrm{CNOT}}}
\newcommand{\Samp}{\text{\textsf{Samp}}}
\newcommand{\veps}{\varepsilon}
\renewcommand{\ts}{(t,\veps)}
\newcommand{\ksp}{\ensuremath{\mathcal{K}}} 
\newcommand{\msg}{\ensuremath{\mathcal{M}}} 
\newcommand{\dig}{\ensuremath{\mathcal{D}}} 
\newcommand{\range}[3]{{#1}_{#2}\|\dots\|{#1}_{#3}}
\newcommand{\ch}{\ensuremath{\mathcal{C}}}
\newcommand{\adv}{\ensuremath{A}}
\newcommand{\game}{\ensuremath{\mathcal{G}}}
\newcommand{\Gint}{\game^{\textrm{int}}}
\newcommand{\Gext}{\game^{\textrm{ext}}}
\newcommand{\trans}{\ensuremath{\mathcal{T}}}
\newcommand{\reduc}{\ensuremath{\mathcal{R}}}
\newcommand{\grayframe}[2]{%
    \begin{mdframed}[style=figstyle,innerleftmargin=10pt,innerrightmargin=10pt]%
    \begin{minipage}{\textwidth}\begin{center}{#1}\end{center}{#2}%
    \end{minipage}\end{mdframed}}
\newcommand{\procedure}[2]{\textsc{#1}(${#2}$)}
\newcommand{\prop}[1]{\text{\textrm{#1}}}
\newcommand{\coll}{\prop{Coll}}
\newcommand{\qcoll}{\prop{CollQ}}
\newcommand{\sqcoll}{\prop{CollSQ}}
\newcommand{\pre}{\prop{Pre}}
\newcommand{\qpre}{\prop{PreQ}}
\newcommand{\sqpre}{\prop{PreSQ}}
\renewcommand{\sec}{\prop{Sec}}
\newcommand{\qsec}{\prop{SecQ}}
\newcommand{\sqsec}{\prop{SecSQ}}
\newcommand{\apre}{\prop{aPre}}
\newcommand{\qapre}{\prop{aPreQ}}
\newcommand{\sqapre}{\prop{aPreSQ}}
\newcommand{\epre}{\prop{ePre}}
\newcommand{\qepre}{\prop{ePreQ}}
\newcommand{\sqepre}{\prop{ePreSQ}}
\newcommand{\asec}{\prop{aSec}}
\newcommand{\qasec}{\prop{aSecQ}}
\newcommand{\sqasec}{\prop{aSecSQ}}
\newcommand{\esec}{\prop{eSec}}
\newcommand{\qesec}{\prop{eSecQ}}
\newcommand{\sqesec}{\prop{eSecSQ}}
\newcommand{\claps}{\prop{CLAPS}}
\newcommand{\prim}[1]{\textsf{#1}}
\newcommand{\ROX}[1]{\prim{ROX}^{(#1)}}
\newcommand{\ROXp}[1]{\overline{\prim{ROX}}^{(#1)}}
\newcommand{\roxpad}{\prim{pad}_{\text{ROX}}}
\newcommand{\imp}{\rightarrow}
\newcommand{\sep}{\nrightarrow}
\newcommand{\timp}{\rightarrowtail}
\newcommand{\tsep}{\nrightarrowtail}
\newcommand{\cimp}{\Rightarrow}
\newcommand{\csep}{\nRightarrow}
\newcommand{\osep}{\nleadsto}
\definecolor{White}{rgb}{1,1,1} %
\definecolor{Black}{rgb}{0,0,0} %
\definecolor{LightGray}{rgb}{.8,.8,.8} %
\colorlet{ChannelColor}{LightGray} %
\colorlet{ChannelTextColor}{Black} %
\colorlet{ReadoutColor}{White} %
\title{Quantum security of hash functions and
  property-preservation of iterated hashing}
\author{Ben Hamlin \and Fang Song}
\institute{%
Texas A\&M University \\
\email{\{hamlinb, fang.song\}@tamu.edu}
}
\date{}
\begin{document}

\maketitle

\begin{abstract}

This work contains two major parts: comprehensively studying the
security notions of cryptographic hash functions against quantum
attacks and the relationships between them; and revisiting whether
Merkle-Damg{\aa}rd and related iterated hash constructions preserve
the security properties of the compression function in the quantum
setting. Specifically, we adapt the seven notions in Rogaway and
Shrimpton (FSE'04) to the quantum setting and prove that the seemingly
stronger attack model where an adversary accesses a challenger in
quantum superposition does not make a difference. We confirm the
implications and separations between the seven properties in the
quantum setting, and in addition we construct explicit examples
separating an inherently quantum notion called collapsing from several
proposed properties. Finally, we pin down the properties that are
preserved under several iterated hash schemes. In particular, we prove
that the ROX construction in Andreeva et al. (Asiacrypt'07) preserves
the seven properties in the quantum random oracle model.

\paragraph{Keywords:} Quantum random-oracle model, Post-quantum
security definitions, Hash functions

\end{abstract}

\section{Introduction}
\label{sec:intro}

\newcommand{\secpar}{\lambda} Cryptographic hash functions, which
produce a short digest on an input message efficiently, are a
ubiquitous building block in modern cryptography. They are
indispensable in constructing key-establishment, authentication,
encryption, digital signature, cryptocurrency, and more, which
constitute the backbone of a secure cyberspace. 
A host of cryptographic hash functions have been
designed~\cite{NISTsha} which have been subject to extensive
cryptanalysis. Most of the constructions follow the \emph{iterated}
hash paradigm, which iterates a compression function on a small
domain.

The emerging technology of quantum computing brings devastating
challenges to cryptography. In addition to breaking widely deployed
public-key cryptography due to Shor's efficient quantum algorithm for
factoring and discrete logarithm, effective quantum attacks on
symmetric primitives have been found in recent years that break
of a variety of message authentication and authenticated encryption
schemes~\cite{KLLNP16,SS16}.

In this work, we revisit two fundamental threads of cryptographic hash
functions in the presence of quantum attacks: modeling basic security
properties and establishing their interrelations; and pinning down
whether the iterated hash constructions \emph{preserve} the security
of the underlying compression functions.

A principal security property is \emph{collision resistance}: It
should be computationally infeasible to compute a \emph{collision}
$(x,x')$ such that $H(x) = H(x')$. Two other basic properties are
preimage resistance (\pre) and second-preimage resistance (\sec).
Rogaway and Shrimpton extend the three and arrive at a total of seven
properties to cope with various scenarios~\cite{RS04}. More
specifically, they consider a family of hash functions $H: \ksp \times
\msg \to \dig$. Conventional $\pre$ and $\sec$ require that under a
\emph{random} key, it is infeasible to find a preimage of a
\emph{random} digest or to find a message that forms a collision
with a given \emph{random} input. They propose two variations named
\emph{always} and \emph{everywhere}. For example, always preimage
resistance (\apre) allows an attacker to pick a key $K$ at will,
and $H_K$ needs to be preimage resistant in the usual sense. This
reflects that real-world hash functions are standalone
(i.e., unkeyed), so it is important to \emph{always} enforce
the property on all members in the hash family. In a complementary
vein \emph{everywhere} preimage resistance (\epre), for instance, asks
about finding a preimage on any digest (i.e., adversarially chosen as
opposed to a random one) being hard. They give a comprehensive
characterization of the seven properties, including both implications
and separations. For instance, they show that while $\coll$ implies
standard $\pre$, there exist $\coll$ hash functions that are not
$\apre$ or $\epre$. This motivates our first question of this work:
\begin{center}
  \emph{How do we model these properties appropriately against quantum
    attacks, and what are the relationships between them?}
\end{center}

Once the appropriate quantum security notions have been nailed down,
we would like to construct hash functions achieving various desired
properties. The dominating design framework is \emph{iterated}
hashing, which takes a compression function on a relatively small
domain and runs it iteratively, with minor variations, to process
longer messages. The Merkle-Damg{\aa}rd
construction~\cite{Merkle89,Dam89} (adopted by SHA-1,2 families) and
the sponge construction~\cite{BDPVA07} (adopted in SHA-3) are notable
examples. As a modular approach to attaining security, researchers ask
whether the iterated hash preserves the security of the compression
function. It is known that Merkle-Damg{\aa}rd is collision resistant
as long as the compression function is collision resistant. However it
does not preserve preimage resistance: There is a preimage-resistant
compression function, such that plugging it into Merkle-Damg{\aa}rd
fails to result in preimage-resistance. Andreeva et al.~\cite{ANPS07}
study several variants of Merkle-Damg{\aa}rd, such as
XOR-linear~\cite{BR97} and Shoup's~\cite{Shoup00} hash schemes, and
determine their security-preserving capabilities. In short, none of
them are able to preserve all seven properties. They therefore propose
a new iterated construction, \emph{ROX}, built on XOR-linear hash, and
prove that it preserves all seven properties in the random oracle
model\footnote{The compression function is not given as a random
oracle. Rather apart from the compression function, the construction
has access to a public random function that is given as a black-box.}.
In contrast, we refer to other constructions as being in the plain
model. We pose the second major question of this work:
\begin{center}
  \emph{Is ROX security preserving in the quantum setting?}
\end{center}

A positive answer will dramatically simplify the design of
secure hash functions to the design of a secure compression function
of a small size. Answering this question, however, could be
challenging and subtle. What we prove classically often fails to carry
over against quantum attacks for some fundamental reasons (e.g.,
no-cloning of quantum states or probabilistic analysis that has no
counterpart in the quantum formalism). There has been extensive work
developing tools for analyzing quantum
security~\cite{Watrous09,Unruh12,Song14,Zhandry12_qprf}. In
particular, Unruh proves that Merkle-Damg{\aa}rd preserves collapsing,
and it can be observed that collision resistance is also preserved in
the quantum setting. More specific to ROX, the random oracle model
faces grave difficulties in the presence of quantum
adversaries~\cite{BDF+11}. For example, classically one can easily
simulate a random oracle by \emph{lazy sampling} the responses upon
every query \emph{on-the-fly}. A quantum query, which can be in
\emph{superposition} of all possible inputs seems to force the
function to be completely specified at the onset. Likewise, the
powerful trick of programming a random oracle, i.e., changing the
outputs on some input points as long as they have not been queried
before, appears impossible if quantum queries are permitted. Recently,
there is progress on restoring proof techniques including programming
a quantum random oracle~\cite{ARU14,Unruh14,ES15,HRS16}.

\paragraph{Our contributions.} We investigate the two questions
systematically in this work. The main results are summarized below.

We formalize the seven security notions in the quantum
setting\footnote{Some standard notions have appeared in the literature
before~\cite{HRS16}.}. Since all properties are described in simple
interactive games, we face two options to modeling quantum attackers
depending on whether the \emph{interface} between the challenger and
the adversary remains classical or can also be quantum. We call the
latter ``fully'' or ``strong'' quantum attacks, reminiscent of an
active line of work recently~\cite{BZ13b,Unruh14,AR17}. This stronger
type of attack is more realistic in some cases than others. Our
interesting finding is that which model we use makes \emph{no}
difference in this setting, by a simple observation of commutativity
of some quantum operators. Namely, the security property (e.g. \apre)
against a quantum adversary and classical communication with the
challenger is equivalent to that where the access to the challenger
can be quantum too.

We depict the landscape of the seven notions in the quantum setting as
well as the collapsing property, by fully determining their
relationships (Figure~\ref{properties-diagram}). For most of the
existing implications and separations in~\cite{RS04}, we apply a
general lifting tool in~\cite{Song14} to make analogous conclusions in
the quantum setting. We construct new examples to separate collapsing
from our quantum notions of $\qasec$ and $\qesec$, and derive other
relations by transitivity. Unruh's separation example between
collapsing and collision resistance~\cite{Unruh16_comm} is the only
one that is relative to an oracle.

We determine the security-preserving capabilities of various iterated
hash constructions. We show that the results in~\cite{ANPS07} (other
than ROX) can be ``lifted'' into a quantum setting. As to ROX, we
adapt techniques of programming a quantum random oracle and show that
ROX preserves all security properties we consider in this work.

\paragraph{Discussion.} As Andreeva et al. remarked in their work, ROX
is proven secure in the random oracle model. Can we design an iterated
hash that is all-preserving in the plain model? Recently there is
another quantum notion extending collision resistance proposed
in~\cite{AMRS18} termed \emph{Bernoulli-preserving}. It implies
collapsing and appears stronger. Do the iterated hash constructions
preserve collapsing and Bernoulli-preserving of the compression
function? Another interesting future direction is to investigate
whether iterated hash can be \emph{amplifying}, especially with the
assistance of a random oracle such as in ROX. Finally, we consider
variants of the Merkle-Damg{\aa}rd and Merkle Tree constructions. Less
is known about the versatile sponge construction in terms of
security-preserving of round functions. It has been shown very
recently that the sponge is collapsing assuming the round functions
are truly random~\cite{CBHSU18}.

\section{Preliminaries}
\label{sec:prelim}

\paragraph{Notations.} Hash-function properties are formulated as
games with a challenger $\ch$ and an adversary $\adv$. $\ch$ and
$\adv$ perform one or more rounds of communication, after which $\ch$
outputs a bit indicating whether $A$ ``won''. Our proofs take the form
of reductions, where winning the game allows us to create an adversary
to win another game that is supposed to be hard. Following
on~\cite{Song14}, we formalize a reduction as a tuple
$(\Gint,\trans,\Gext)$ where $\Gext$ is the game that is assumed to be
hard, $\Gint$ is the game we would like to show to be secure, and
$\trans$ transforms an adversary $\adv$ for $\Gint$ into one for
$\Gext$. If $\trans$ is efficient and maintains $\adv$'s success
probability up to a negligible difference, showing the existence of a
reduction is a proof by contradiction that $\Gint$ is hard.

We are concerned primarily with quantum adversaries. These are
adversaries that run in polynomial time on a quantum computer (\QPT).
We call the probability that this adversary succeeds its
``advantage'', denoted by $\Adv{\prop{prop}}{H}(\adv)$, where $H$ is a
hash function. By $\Adv{\prop{prop}}{H}$, we mean the maximum
advantage over \QPT adversaries. When discussing concrete security, we
say that $H$ is $\ts$-$\prop{prop}$ if for all adversaries $A$ running
in time at most $t$, $\Adv{\prop{prop}}{H}{\adv}\leq\veps$. When the
interaction between $\ch$ and an adversary has two rounds, we
sometimes refer to an adversary as having two parts $(A,B)$. In this
case, they share a state register $S$, which the challenger may not
read or modify. By convention, we use capital letters to indicate
quantum registers. Measuring a quantum register ($\Meas(\cdot)$)
results in a classical value, which we denote with the corresponding
lowercase letter.

We assume there exists a security parameter $n$ for each hash function
that corresponds to the size of a key. A probability is negligible,
denoted $\negl(n)$, if it is less than $\frac{1}{\poly(n)}$, where
$\poly(\cdot)$ is any polynomial function. By $\tau_H$, we mean the
time required to compute $H$. We indicate sampling from a distribution
or receiving a result from a probabilistic algorithm by $x \from S$.
When $S$ is a set, this indicates uniform sampling, unless otherwise
noted.

\paragraph{Quantum random oracles.} One goal of this paper is to
translate results about the ROX construction from the classical (RO)
to the quantum (QRO) random oracle model. In general, results proven
in the classical RO model do not necessarily carry over to a quantum
setting, and even when they do, the techniques often need to be
modified.

Even efficiently simulating a random oracle---a simple task in a
classical setting, since an algorithm can simply lazily answer
$\poly(n)$ queries---is not obviously possible in a quantum setting. A
quantum query could be a superposition of exponentially many inputs,
naively requiring an exponential number of samples from the oracle's
codomain to simulate. Zhandry showed that it is possible to
efficiently simulate a random oracle using $2q$ samples, where $q$ is
the number of queries made to the oracle (Corollary 1 of Theorem 3.1
from~\cite{Zhandry2012}). Whenever we refer to simulating a QRO, we
refer to this technique.

Another property of classical random oracles is that they can be
adaptively programmed. That is, even after a polynomial number of
queries have been made, the algorithm simulating the oracle can change
the output of the oracle at some input points, since it is unlikely
that $\adv$ has seen the output at those points. However, a single
quantum query in superposition can ``see'' the output at all points of
the domain. We use a technique for programming a quantum random oracle
from~\cite{ES15}, which defines a ``witness-search'' game in which an
adversary must guess a ``witness'' $\hat{w}$ with $P(\hat{w})=1$,
given some predicate $P$ and public information $pk$ chosen by the
challenger, given that the challenger knows a witness $w$. The
probability that any \QPT adversary detects adaptive programming at a
point $x$ with $P(\hat{w})=1$ is at most his success probability in
witness search.

\paragraph{Standard hash-function security.} Rogaway and
Shrimpton~\cite{RS04} identify seven properties of hash functions.
These consist of the standard collision resistance (\coll), preimage
resistance (\pre), and second-preimage resistance (\sec), as well as
two stronger variants of each of the latter two---``always'' (\apre,
\asec) and ``everywhere'' (\epre,\esec)---which give the adversary
more power. The following defines standard collision, preimage, and
second-preimage resistance:

\begin{align}
        \Adv{\coll}{H}(A) &= \Pr[x \not= x' \land H_k(x) = H_k(x') \:
                k \from \ksp;
                x,x' \from A(1^n,k)]
        \label{adv-Coll}
\\      \Adv{\pre}{H}(A) &= \Pr[H_k(x) = y \:
                k \from \ksp;
                x' \from M;
                y = H_k(x');
                x \from A(1^n,k,y)]
        \label{adv-Pre}
\\      \Adv{\sec}{H}(A) &= \Pr[x \not= x' \land H_k(x) = H_k(x') \:
                k \from \ksp;
                x' \from M;
                x \from A(1^n,k,x')]
        \label{adv-Sec}
\end{align}

Note that the challenger chooses the key $k$, and in the latter two
properties, challenger chooses the target that the preimage needs to
match. A successful adversary needs to work with non-negligible
probability regardless of what the challenger chooses. One way to
create a stronger property would be to relax this requirement on
either the key or the preimage target.

Allowing the adversary to choose the key results in the ``always''
variants of preimage and second-preimage resistance. Here, the
adversary is given as a pair of algorithms $(A,B)$: $A$ is responsible
for choosing the key, and $B$ is responsible for guessing the
preimage.

\begin{align}
        \Adv{\apre}{H}(A,B) =& \Pr[H_k(x) = y \nonumber
        \\  &\: k,S \from A(1^n);
                x' \from M;
                y = H_k(x');
                x \from B(1^n,S,y)]
        \label{adv-aPre}
\\      \Adv{\asec}{H}(A,B) =& \Pr[x \not= x' \land H_k(x) = H_k(x') \nonumber
        \\ &\: 
                k,S \from A(1^n);
                x' \from M;
                x \from B(1^n,S,x')]
        \label{adv-aSec}
\end{align}

Alternatively, allowing the adversary to choose the target the
preimage must match before knowing the key results in the
``everywhere'' variants of these properties:

\begin{align}
        \Adv{\epre}{H}(A,B) =& \Pr[H_k(x) = y \:
                y,S \from A(1^n);
                k \from \ksp;
                x \from B(1^n,S,k)]
        \label{adv-ePre}
\\      \Adv{\esec}{H}(A,B) =& \Pr[x \not= x' \land H_k(x) = H_k(x') \nonumber
        \\  &\: x',S \from A(1^n);
                k \from \ksp;
                x \from B(1^n,S,k)]
        \label{adv-eSec}
\end{align}

\begin{figure}
\centering

\subfloat[][$\Game_1$]{
\Qcircuit @C=1em @R=.7em {
        & \ustick{k}\cwx[1]  &               &           &                  &
\\      & \multigate{1}{A}   & \ustick{S}\qw & \qw       & \multigate{1}{B} & \qw
\\      & \ghost{A}          & \ustick{X}\qw & \meter    & \ghost{B}        & \qw
\\      & \dstick{y}\cwx     &               &           & \dstick{b}\cwx   &
}
\label{CLAPS-game1}
}
\hspace{1em}
\subfloat[][$\Game_2$]{
\Qcircuit @C=1em @R=.7em {
        & \ustick{k}\cwx[1]  &               &     &     &                  &
\\      & \multigate{1}{A}   & \ustick{S}\qw & \qw & \qw & \multigate{1}{B} & \qw
\\      & \ghost{A}          & \ustick{X}\qw & \qw & \qw & \ghost{B}        & \qw
\\      & \dstick{y}\cwx     &               &     &     & \dstick{b}\cwx   &
}
\label{CLAPS-game2}
}

\caption{Games defining the $\claps$ property.}
\label{CLAPS-games}

\end{figure}
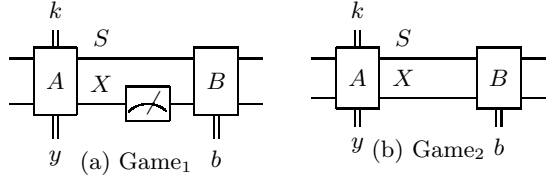

A standard quantum-only property is called
``collapsing''~\cite{Unruh16_comm,Unruh16_coll} ($\claps$). Let $y \in
\dig$ be an element of the digest space of $H_k$. $\claps$ captures
the idea that it should be difficult for an adversary to produce a
``useful'' superposition of elements of the set $H_k^{-1}(y) \subseteq
\msg$. If a hash function is not collapsing, an adversary may be able
to find some input-output pair with desirable properties even if it
can succeed with only negligible advantage in the $\coll$ game.

An adversary for $\claps$ is a pair of \QPT algorithms $(A,B)$. On
input $k$, $A$ outputs quantum registers $S,X$ and a classical
register $y$. We call the adversary ``correct'' if
$\Pr[H_k(\Meas(X))=y] = 1$, and we restrict our attention to correct
adversaries. On input $S,X$, $B$ outputs a classical bit $b$ that
represents a guess whether $X$ has been measured. The collapsing
advantage
$\Adv{\claps}{H}(A,B)=|\Pr[b=1\:\Game{1}]-\Pr[b=1\:\Game{2}]|$, where
$\Game{1,2}$ are as shown in Fig.~\ref{CLAPS-games}.

\section{Quantum security properties of hash functions}
\label{sec:defs}

We adapt the above notions from~\cite{RS04} to a quantum setting
by allowing the adversary to be \QPT, rather than \PPT, as in
the original definitions. The hash function is public, so he can make
superposition queries to it, but all interactions with the challenger
are classical. With the exception of the $\poly(n)$-qubit
state register $S$, we assume that the adversary measures all of its
wires before outputting them. We call these variants \qcoll, \qpre,
etc.

It would be natural to ask whether stronger properties result from
allowing the \emph{interface} between the adversary and the challenger
to be quantum. In other words, the adversary does not measure its
wires before outputting them. At the end, the challenger measures all
registers to determine whether the adversary has succeeded. These
properties, which we call ``strongly quantum'' (SQ), are defined as
follows, where $K$, $Y$, and $X'$ are quantum registers:

\begin{align}
        \Adv{\sqcoll}{H}(A) =& \Pr[x \not= x' \land H_k(x) = H_k(x') \nonumber
        \\      &\:     k \from \ksp;
                        X,X' \from A(1^n,k);
                        x,x' \from \Meas(X,X')]
        \label{adv-Collsq}
\\      \Adv{\sqpre}{H}(A) =& \Pr[H_k(x) = y \nonumber
        \\      &\:     k \from \ksp;
                        x' \from M;
                        y = H_k(x');
                        X \from A(1^n,k,y);
                        x \from \Meas(X)]
        \label{adv-Presq}
\\      \Adv{\sqsec}{H}(A) =& \Pr[x \not= x' \land H_k(x) = H_k(x') \nonumber
        \\      &\:     k \from \ksp;
                        x' \from M;
                        X \from A(1^n,k,x');
                        x \from \Meas(X)]
        \label{adv-Secsq}
\\      \Adv{\sqapre}{H}(A) =& \Pr[H_k(x) = y \nonumber
        \\      &\:     K,S \from A(1^n);
                        x' \from M;
                        Y = U_{H(x')}(K \otimes \ket{0}); \nonumber
        \\      &\ \ \  X \from B(1^n,S,Y);
                        k,x \from \Meas(K,X)]
        \label{adv-aPreSQ}
\\      \Adv{\sqasec}{H}(A) =& \Pr[x \not= x' \land H_k(x) = H_k(x') \nonumber
        \\      &\:     K,S \from A(1^n);
                        x' \from M;
                        X \from B(1^n,S,x');
                        k,x \from \Meas(K,X)]
        \label{adv-aSecSQ}
\\      \Adv{\sqepre}{H}(A) =& \Pr[H_k(x) = y \nonumber
        \\      &\:     Y,S \from A(1^n);
                        k \from \ksp;
                        X \from B(1^n,S,k);
                        x,y \from \Meas(X,Y)]
        \label{adv-ePreSQ}
\\      \Adv{\sqesec}{H}(A) =& \Pr[x \not= x' \land H_k(x) = H_k(x') \nonumber
        \\      &\:     X',S \from A(1^n);
                        k \from \ksp;
                        X \from B(1^n,S,k);
                        x,x' \from \Meas(X,X')]
        \label{adv-eSecSQ}
\end{align}

In (\ref{adv-aPreSQ}), $U_{H(x')}$ is quantum gate that acts as
$U_{H(x')} \: \ket{k}\ket{y} \mapsto \ket{k}\ket{y \oplus H_k(x')}$.
In other words, given a key register $K$ in superposition, it outputs
a superposition of digests for $x'$.

It is easy to see that \sqcoll, \sqpre, and \sqsec (\ref{adv-Collsq},
\ref{adv-Presq}, and \ref{adv-Secsq}) are equivalent to their
counterparts (\ref{adv-Coll}, \ref{adv-Pre}, and \ref{adv-Sec})
defined above: The challenger immediately measures the adversary's
output registers, so without loss of generality, we may assume that
the adversary measures all output registers itself.

As it happens, the other SQ properties
(\ref{adv-aPreSQ}--\ref{adv-eSecSQ}) are equivalent to the above
versions (\ref{adv-aPre}--\ref{adv-eSec}) as well. Intuitively, this
is because, although $A$ can put a superposition of values on its
output register, the challenger never gives this register to $B$. If
the challenger did so, it would be unable to check whether the
adversary had won, since it would no longer have a copy of that
register. Hence, the quantum ``interface'' with the challenger gives
the adversary no additional power in this case.

A more formal proof requires us to show the equivalence of two quantum
circuits. We give the full proof for $\qapre\equiv\sqapre$ in Appendix
\ref{sec:aPre-aPreSQ-equiv}. The proofs for $\qasec$, $\qepre$, and
$\qesec$ are similar, but slightly more straightforward, in that they
do not require Lemma~\ref{aPre-aPreSQ-equiv-lemma2}.

\section{Relations of quantum security properties}
\label{sec:rels}

\begin{figure}
\centering

\subfloat[H][]{
\centering

\resizebox{0.29\textwidth}{!}{
\begin{tikzpicture}[scale=1.5,auto=left]

\node (CLAPS) at (2,4)   {\claps};
\node (Coll)  at (2,3)   {\qcoll};
\node (aSec)  at (1,2.5) {\qasec};
\node (Sec)   at (2,2)   {\qsec};
\node (eSec)  at (3,2.5) {\qesec};
\node (aPre)  at (1,1.5) {\qapre};
\node (Pre)   at (2,1)   {\qpre};
\node (ePre)  at (3,1.5) {\qepre};

\draw[->,>=latex]        (CLAPS) to[bend right] (Coll);
\draw[->,>=latex,dotted] (Coll)  to[bend right] (CLAPS);
\draw[->,>=latex]        (Coll)  to             (Sec);
\draw[->,>=latex]        (Sec)   to             (Pre);
\draw[->,>=latex]        (Coll)  to             (eSec);
\draw[->,>=latex]        (aSec)  to             (aPre);
\draw[->,>=latex]        (aSec)  to             (Sec);
\draw[->,>=latex]        (eSec)  to             (Sec);
\draw[->,>=latex]        (aPre)  to             (Pre);
\draw[->,>=latex]        (ePre)  to             (Pre);

\end{tikzpicture}
}
\label{properties-diagram}

}~\subfloat[H][]{
\centering

\raisebox{2cm}{\resizebox{0.69\textwidth}{!}{
\begin{tabular}{l|cccccccc}
          &  \qsec  & \qasec  & \qesec  &  \qpre  & \qapre  & \qepre  & \qcoll  & \claps
\\ \hline
   \qsec  &    -    & $\csep$ & $\csep$ & $\cimp$ & $\csep$ & $\csep$ & $\csep$ & $\tsep$
\\ \qasec & $\cimp$ &    -    & $\csep$ & $\cimp$ & $\cimp$ & $\csep$ & $\csep$ & $\tsep$
\\ \qesec & $\cimp$ & $\csep$ &    -    & $\cimp$ & $\csep$ & $\csep$ & $\csep$ & $\tsep$
\\ \qpre  & $\csep$ & $\csep$ & $\csep$ &    -    & $\csep$ & $\csep$ & $\csep$ & $\tsep$
\\ \qapre & $\csep$ & $\csep$ & $\csep$ & $\cimp$ &    -    & $\csep$ & $\csep$ & $\tsep$
\\ \qepre & $\csep$ & $\csep$ & $\csep$ & $\cimp$ & $\csep$ &    -    & $\csep$ & $\tsep$
\\ \qcoll & $\cimp$ & $\csep$ & $\cimp$ & $\cimp$ & $\csep$ & $\csep$ &    -    & $\osep$
\\ \claps & $\timp$ & $\sep$  & $\timp$ & $\timp$ & $\sep$  & $\sep$  & $\imp$  &    -
\end{tabular}
}}
\label{properties-table}

}

\caption{
The relationships between the properties defined in
Sect.~\ref{sec:defs}. In Fig.~\ref{properties-diagram}, solid arrows
indicate implications. Everywhere an arrow (or a transitive
implication) is absent indicates a separation. In
Fig.~\ref{properties-table}, $\imp$ and $\sep$ are implications with
explicit proofs, and $\timp$ and $\tsep$ hold by transitivity, $\cimp$
and $\csep$ indicate quantumly ``lifted'' classical reductions. The
dotted arrow in Fig.~\ref{properties-diagram} and $\osep$ in
Fig.~\ref{properties-table} indicate that only a relativized
separation has been shown~\cite{Unruh16_comm}.
}
\label{properties}

\end{figure}
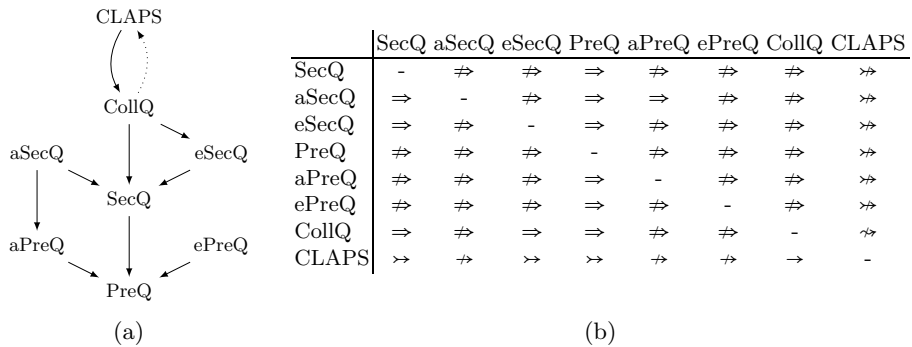

In this section, we examine the relationships among the properties in
Sect.~\ref{sec:defs}. Fig.~\ref{properties} illustrates these
graphically. The relationships among the properties with classical
analogs carry over from the classical setting, based on the framework
from~\cite{Song14}. The following is a sufficient criterion for
``lifting'' a reduction from a classical to a quantum setting:

\begin{lemma}[Corrollary 4.6 from~\cite{Song14}]
Let $\reduc=(\Gext,\trans,\Gint)$ be a black-box reduction that holds
for \PPT machines, and suppose the following:
\begin{enumerate}
\item $\Gint(\adv)$ and $\Gext(\adv)$ are defined for \QPT $\adv$;
\item $|\Adv{\Gint}(\adv)-\Adv{\Gext}{}(\trans(\adv))| \leq \negl(n)$
      for all \QPT $\adv$;
\item when $\trans$ runs $\adv$, it runs it ``in a straight line until
      completion,'' i.e., as an honest challenger would; and
\item for all $\adv,\adv'$ with $\Adv{\Gint}{}(\adv)=\Adv{\Gext}{}(\adv')$,
      $\Adv{\Gext}{}(\trans(\adv))=\Adv{\Gext}{}(\trans(\adv'))$.
\end{enumerate}
Then $\reduc$ holds for \QPT machines as well.
\label{crsl-reduction}
\end{lemma}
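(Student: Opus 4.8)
This statement is quoted verbatim as Corollary~4.6 of~\cite{Song14}, so the honest plan is to invoke it directly rather than reprove it; what follows is a sketch of the argument one reconstructs from the four hypotheses, which between them make the proof essentially mechanical. The plan is to argue the contrapositive: assuming $\Gext$ is hard against \QPT machines, I would show $\Gint$ is too. So fix a \QPT adversary $\adv$ against $\Gint$ and consider the transformed adversary $\trans(\adv)$; two things need checking.

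First I would check that $\trans(\adv)$ is itself a legitimate \QPT adversary against $\Gext$. This is exactly what hypothesis~(3) buys: because $\trans$ runs $\adv$ ``in a straight line until completion,'' interacting with it only as an honest challenger would, it never clones $\adv$'s internal registers and never rewinds $\adv$ to an earlier point, so the usual quantum obstructions (no-cloning, irreversibility of measurement) do not arise and the composition $\trans(\adv)$ is \QPT. Hypothesis~(1) ensures that both $\Gint$ and $\Gext$ are even well-defined when played against such an adversary.

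Next I would transfer the advantage. Classically, validity of the black-box reduction $\reduc$ means that $\Adv{\Gext}{}(\trans(\adv'))$ is non-negligible whenever $\Adv{\Gint}{}(\adv')$ is, for \PPT $\adv'$. To apply this to our quantum $\adv$, I would use hypothesis~(4): it forces $\Adv{\Gext}{}(\trans(\cdot))$ to depend on its argument only through the internal advantage, so $\trans(\adv)$ inherits the same $\Gext$-advantage as any (for the purpose of this step, possibly inefficient but classically describable) adversary whose $\Gint$-advantage equals $\Adv{\Gint}{}(\adv)$. Pushing the classical guarantee through this identification and absorbing the slack permitted by hypothesis~(2), I conclude that $\Adv{\Gext}{}(\trans(\adv))$ is non-negligible whenever $\Adv{\Gint}{}(\adv)$ is. Together with the first step this contradicts the \QPT-hardness of $\Gext$, so no such $\adv$ exists and $\Gint$ is hard.

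The crux --- and the reason this is packaged as a lemma rather than a triviality --- is the transfer step: one has to pin down precisely which classical reductions are \emph{black-box} in the strong sense that their correctness depends only on the adversary's input--output distribution, so that nothing in the classical analysis secretly exploits the probabilistic, rather than quantum, nature of the adversary. That bookkeeping is the substance of~\cite{Song14}, which I would cite rather than redo. For the present section, the only remaining work is to verify hypotheses (1)--(4) for each of the classical reductions of~\cite{RS04} we wish to reuse --- a routine check, since those reductions run the adversary once in a straight line and tie the reduction's winning condition directly to the adversary's.
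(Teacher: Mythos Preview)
Your proposal is correct in spirit and, in fact, matches the paper exactly on the only point that matters: the paper does \emph{not} prove this lemma at all --- it simply states it as Corollary~4.6 of~\cite{Song14} and invokes it as a black box. Your opening sentence already says this is the right plan, and the sketch you add on top is a reasonable reconstruction of why the four hypotheses suffice, but none of that sketch appears in the paper. So there is nothing to compare; you and the paper agree that the proof lives in~\cite{Song14} and should be cited, not reproduced.
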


All the classical implication proofs from~\cite{RS04} ($\cimp$ from
Fig.~\ref{properties-table}) satisfy the hypotheses in
Lemma~\ref{crsl-reduction}, and thus that these proofs can be lifted
into a quantum setting. For example, the standard proof that
$\coll\imp\esec$ involves creating a reduction
$(\Gext=\coll,\trans,\Gint=\esec)$ where $\trans$ is defined as
follows:

\begin{mdframed}[style=figstyle,innerleftmargin=10pt,innerrightmargin=10pt]
\centering{\text{$\trans \: A \mapsto A'$}}
\begin{enumerate}
\item Sample $x \from \msg$ and send it to the challenger.
\item Receive $k$ from $\ch$.
\item Run $A(1^n,k,x)$ to get $x'$ and send $(x,x')$ to the challenger.
\end{enumerate}
\end{mdframed}

Note that $\trans$ could be applied to a quantum $\adv$ for $\qesec$
as easily as a classical one for $\esec$, and the result,
$\trans(\adv)$ finds a collision in the $\qcoll$ game. This is
guaranteed due to the classical ``interface'' in the definitions from
Sect.~\ref{sec:defs}. Moreover, it runs $A$ as normal. So hypotheses 1
and 3 from Lemma~\ref{crsl-reduction} hold. Hypothesis 2 holds as
well, since the success probabilities of $\adv$ and $\trans(\adv)$ are
the same Hypothesis 4 captures the idea that the success probability
of $\trans(\adv)$ depends only on the success probability of $\adv$,
not some specific facet of its internal behavior. This is easily seen
to be the case here.

The classical separations from~\cite{RS04} ($\csep$ from
Fig.~\ref{properties-table}) can also be lifted in a similar fashion.
For example, the proof that $\coll$ does not imply $\asec$ runs as
follows: Suppose that $H$ is $\coll$. We define a new function $H'$
such that if $k\not=0$, $H'_k(x)=H_k(x)$, but $H_0(x)=0$.
There is a trivial attack for $\asec$ on $H'$: The adversary simply
chooses $k=0$ and outputs any $x'\not=x$ as a second preimage.
Finally, we show that $H'$ is still collision resistant using a simple
reduction. The first half of this proof (the attack) is clearly as
possible on a quantum computer as it is on a classical one. In fact,
the structure of the properties from Sect.~\ref{sec:defs} (excluding
$\claps$)---where the adversary is given classical input and must
produce classical output---guarantees this. Moreover, as with the
implication proofs, the reductions in the separation proofs satisfy
the hypotheses in Lemma~\ref{crsl-reduction}. So we conclude that
these separations hold in a quantum setting as well.

We additionally examine the relationships between collapsing and each
of the standard properties. Unruh shows in~\cite{Unruh16_comm} that
collapsing implies collision resistance, and this proof applies to
$\qcoll$ as well. This leads to the transitive implications from
$\claps$ in Fig.~\ref{properties-table}. We find that $\claps$ does
\emph{not} imply $\qapre$, $\qasec$, or $\qepre$. The proofs of these
separations are given in Appendix~\ref{sec:CLAPS-seps}.

\section{Quantum security preservations of iterated hash constructions}
\label{sec:md}

In this section, we consider whether several standard iterated hash
constructions, including one in the random oracle model (ROX),
preserve the quantum-safe properties from Sect.~\ref{sec:defs}. The
constructions we consider are the same as those considered
in~\cite{ANPS07}, and we find that they preserve (and fail to
preserve) the quantum analogs of the same properties
that~\cite{ANPS07} show they do classically. In the case of the
standard constructions, we omit explicit proofs, instead using the
lifting framework we introduced in Sect.~\ref{sec:rels}. The proofs
for ROX, meanwhile, are more subtle, since they must be adapted to the
quantum random oracle model. We give explicit proofs in the most
interesting of these cases.

Andreeva et al. discuss eleven standard iterated hash constructions,
proving exhaustively (with a few exceptions) which of the seven
classical properties from~\cite{RS04} they preserve. These proofs are
amenable to being ``lifted'' to a quantum setting by reasoning similar
to that in Sect.~\ref{sec:rels}: Each implication proof uses a reduction
that satisfies the hypotheses of Lemma~\ref{crsl-reduction}. Each
separation combines an attack, which is still possible in a quantum
setting given the nature of the games we consider, and a reduction,
which also satisfies the hypotheses.

In contrast, we cannot use Lemma~\ref{crsl-reduction} to lift the
proofs for the random-oracle model construction ROX. In particular,
the reductions used cannot claim to run $\adv$ identically to an
honest challenger, since they must simulate a pair of random oracles.
This violates Hypothesis 3 of the lemma. Although the same
results hold, the proofs must be explicitly adapted, which we do
below.

\subsection{ROX preserves all quantum properties}
\label{sec:rox}

\begin{definition}[ROX]\label{def:rox}
$\ROX{H}_k = \ROXp{H}_k \circ \roxpad$
\begin{alignat*}{3}
        & \roxpad \:    &&\{0,1\}^*       &&\to     (\{0,1\}^b)^* \nonumber
\\      &               &&x               &&\mapsto \trunc_b(x \| \RO_2(\bar{x},|x|,1) \| \RO_2(\bar{x},|x|,2) \| \dots)
\\      & && && \nonumber
\\      & \ROXp{H}_k \: &&(\{0,1\}^b)^*   &&\to     \{0,1\}^d \nonumber
\\      &               &&\Lambda         &&\mapsto \IV \nonumber
\\      &               &&x_1\|\dots\|x_i &&\mapsto H_k(x_i \| \ROXp{H}_k(x_1\|\dots\|x_{i-1}) \oplus \RO_1(\bar{x},k,\nu(i)))
\end{alignat*}
where $\Lambda$ is the empty string; $\nu(i)$ is the largest integer
such that $2^{\nu(i)}$ divides i; $\bar{x}$ is the first $n$
bits of $x$; $\IV\in\bit^d$ is a fixed string; and
\begin{itemize}
\item $H \: \bit^n \times \bit^m \to \{0,1\}^d$, where $b=m-d>0$ is the block
      size, and $d \geq 2b$;
\item $\RO_1\:\bit^{2n+\log L}\to\bit^d$ and $\RO_2\:\bit^{n+2\log L}\to\bit^{2n}$
      are random oracles, where $L$ is the maximum input size in blocks; and
\item $\trunc_b$ truncates its input to a multiple of $b$ bits;
\end{itemize}
We denote the block length of $x$ as $\ell(x)=\ceil{(|x|+2n)/b}$, the
number of padding blocks as $1\leq
q_2(x)\leq\ceil{\frac{b+2n-1}{2n}}$, and the total oracle queries as
$q(x)=\ell(x)+q_2(x)$.
\end{definition}

Andreeva et al.~\cite{ANPS07} describe an iterated hash called ROX
(Definition~\ref{def:rox}) that preserves all of the classical
properties discussed in~\cite{RS04}. In addition to a compression
function, ROX relies on two random oracles ($\RO_{1,2}$), although it
does not rely on this fact for all proofs. Specifically, ROX
preserves $\apre$, $\pre$, $\asec$, and $\sec$ in the random oracle
(RO) model, and $\coll$, $\epre$, and $\esec$ in the standard model.

We show that ROX also preserves the quantum analogs of these
properties. Andreeva et al.'s standard-model proofs carry over nearly
unchanged for $\qcoll$, $\qepre$, and $\qesec$ carry over nearly
unchanged, so we omit those proofs. We show that ROX preserves
$\qapre$, $\qpre$, $\qasec$, and $\qsec$, replacing the classical RO
model with the QRO model.

We begin by stating the existence of some constructions using ROX that
will be useful in our proofs. The full constructions are given in
Appendix~\ref{sec:ROX-lemmas}.

\begin{lemma}[Extracting collisions on $H_k$ from collisions on
        $\ROX{H}_k$]
Given $\hat{x},\hat{x}'\in\dom(\ROX{H}_k)$ with $\hat{x}\not=\hat{x}'$
and $\ROX{H}_k(\hat{x})=\ROX{H}_k(\hat{x}')$, we can extract
$x,x'\in\dom(H_k)$ with $x\not=x'$ and $H_k(x)=H_k(x')$ except with
probability $\frac{1}{2^n}$ using $\ell(\hat{x})+\ell(\hat{x}')$
applications of $H$ and $q(\hat{x})+q(\hat{x}')$ oracle queries.
\label{ROX-collision-extraction}
\end{lemma}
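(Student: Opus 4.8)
The plan is to adapt the classical ``strengthened Merkle--Damg{\aa}rd'' collision-extraction argument to the masked, random-oracle-padded structure of ROX. First I would unfold the two colliding inputs: write $\roxpad(\hat x)=X_1\|\dots\|X_a$ with $a=\ell(\hat x)$ and $\roxpad(\hat x')=X'_1\|\dots\|X'_{a'}$ with $a'=\ell(\hat x')$, and track the chaining values $g_0=\IV$, $g_i=H_k\bigl(X_i\|(g_{i-1}\oplus\mu_i)\bigr)$ with $\mu_i=\RO_1(\bar{\hat x},k,\nu(i))$, and analogously $g'_j$ with masks $\mu'_j=\RO_1(\bar{\hat x'},k,\nu(j))$; the hypothesis says $g_a=g'_{a'}$. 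Recomputing all of these uses exactly $\ell(\hat x)$ and $\ell(\hat x')$ evaluations of $H$, plus $\ell(\hat x)$ and $\ell(\hat x')$ queries to $\RO_1$ for the masks and $q_2(\hat x)$ and $q_2(\hat x')$ queries to $\RO_2$ for the padding blocks, i.e.\ $q(\hat x)+q(\hat x')$ oracle queries in total, which matches the claimed cost.

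The extraction itself is a backward walk. Keeping indices $j\le a$, $j'\le a'$ with the invariant $g_j=g'_{j'}$ (starting at $j=a$, $j'=a'$), compare the $H_k$-preimages $w_j=X_j\|(g_{j-1}\oplus\mu_j)$ and $w'_{j'}=X'_{j'}\|(g'_{j'-1}\oplus\mu'_{j'})$. If $w_j\neq w'_{j'}$, output $(w_j,w'_{j'})$: this is a collision on $H_k$ since $H_k(w_j)=g_j=g'_{j'}=H_k(w'_{j'})$, and we are done. Otherwise $X_j=X'_{j'}$ and $g_{j-1}\oplus\mu_j=g'_{j'-1}\oplus\mu'_{j'}$, so I decrement both indices, re-establishing $g_{j-1}=g'_{j'-1}$ exactly when the masks $\mu_j$, $\mu'_{j'}$ agree. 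What remains is to show the walk always produces a collision before running off the front of a block sequence.

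This last point is where ROX diverges from plain Merkle--Damg{\aa}rd: the length-encoding that classically forces the contradiction is here replaced by the random oracles, and the argument goes through only outside a small bad event $\mathsf{Bad}$ over the choice of $\RO_1,\RO_2$. I would let $\mathsf{Bad}$ collect the harmful coincidences among the polynomially many relevant oracle outputs --- a collision among the padding blocks $\RO_2(\bar{\hat x},|\hat x|,\cdot)$, $\RO_2(\bar{\hat x'},|\hat x'|,\cdot)$ carrying distinct length/prefix tags, and a collision among the masks $\RO_1(\bar{\hat x},k,\cdot)$, $\RO_1(\bar{\hat x'},k,\cdot)$ carrying distinct prefix tags --- and bound $\Pr[\mathsf{Bad}]\le 2^{-n}$ by a union bound, each such coincidence fixing at least $2n$ (resp.\ $d\ge 2b$) uniform bits.

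Conditioned on $\neg\mathsf{Bad}$: if $|\hat x|=|\hat x'|$ and $\bar{\hat x}=\bar{\hat x'}$, then $a=a'$, the padding and all masks agree, and since $\roxpad$ is injective on inputs of equal length the block sequences satisfy $X_\bullet\neq X'_\bullet$, so the walk must hit an index with differing $H_k$-preimages; if instead $|\hat x|\neq|\hat x'|$ or $\bar{\hat x}\neq\bar{\hat x'}$, then a clean termination would force a segment of one original message to equal a segment of a fresh $\RO_2$-padding string, or two masks with distinct prefix tags to be equal --- precisely the events excluded by $\mathsf{Bad}$. Hence on $\neg\mathsf{Bad}$ the procedure always returns a genuine $H_k$-collision, giving failure probability $\tfrac{1}{2^n}$. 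I expect the main obstacle to be making this last case analysis airtight across all combinations of $a$ versus $a'$, equal/unequal lengths, and equal/unequal first-$n$-bit prefixes --- and verifying that every way the walk can fail really is one of the coincidences packed into $\mathsf{Bad}$, including the boundary behaviour of $\trunc_b$ and the edge case $|\hat x|<n$. Note that this reasoning is entirely classical: the lemma extracts a classical collision from a classical one, and it will only be invoked inside the quantum reductions later, so its proof touches no superpositions.
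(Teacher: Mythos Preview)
Your approach is essentially the same as the paper's: both use a backward walk over the padded block sequences, outputting the first pair of distinct $H_k$-inputs with equal $H_k$-outputs, and both split the analysis on whether $(\bar{\hat x},|\hat x|)=(\bar{\hat x'},|\hat x'|)$. Your ``equal prefix and length'' case is identical to the paper's Case~ii.

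The one place you are working harder than necessary is the ``unequal prefix or length'' case. You propose to follow the walk for several steps and collect all possible ways it could run off the front into a composite event $\mathsf{Bad}$, worrying about mask coincidences, padding/message overlaps, $\trunc_b$ boundaries, and so on. The paper sidesteps all of this with a single observation: the \emph{last} padded block $X_a$ (resp.\ $X'_{a'}$) necessarily contains at least one full $2n$-bit output of $\RO_2$ evaluated at $(\bar{\hat x},|\hat x|,\cdot)$ (resp.\ $(\bar{\hat x'},|\hat x'|,\cdot)$), and these $\RO_2$-inputs are distinct in this case. Hence $X_a\neq X'_{a'}$ except with probability $2^{-n}$, so already $w_a\neq w'_{a'}$ and the walk outputs a collision at the very first comparison. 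No invariant maintenance, no tracking of masks across steps, and no multi-term union bound is needed; the $2^{-n}$ failure probability comes from this single $\RO_2$ coincidence. Your plan is not wrong, but you can replace the entire $\mathsf{Bad}$ machinery in that branch with this one line.
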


\begin{lemma}[Embedding inputs for $H_k$ into inputs for $\ROX{H}$]
\label{ROX-message-embedding}
Given an input $x$ for $H_k$ and an index $i$, we can create an input
$\hat{x}$ for $\ROX{H}_k$ such that the input to the $i$th application
of $H_k$ is $x$ using $i$ calls to $H$ and at most
$\ceil{\frac{b+2n-1}{2n}}+i$ oracle queries. Moreover, an adversary
$\adv$ making $q$ queries notices the change with probability at most
$O(q^2/2^n)$.
\end{lemma}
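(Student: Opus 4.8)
Write $x = x^{(1)}\|x^{(2)}$ with $|x^{(1)}| = b$ and $|x^{(2)}| = d$, so that an input to $H_k$ is a message block $x^{(1)}$ together with a $d$-bit chaining slot $x^{(2)}$. The plan is to build a message $\hat x$ whose $i$-th padded block is $x^{(1)}$ and to reprogram $\RO_1$ at the single point $(\bar x, k, \nu(i))$ so that the masked chaining value that enters the $i$-th invocation of $H_k$ in $\ROX{H}_k(\hat x)$ equals $x^{(2)}$; then that invocation is applied to $x^{(1)}\|x^{(2)} = x$, as desired. The reprogramming will be invisible because the $n$-bit prefix $\bar x$ of $\hat x$ is chosen uniformly at random and hence hidden from $\adv$, so the $O(q^2/2^n)$ bound will follow from the quantum query lower bound for unstructured search, invoked through the reprogramming-as-witness-search lemma of~\cite{ES15} (which is what lets us handle a query in superposition, since such a query necessarily touches the reprogrammed coordinate).

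\textbf{Construction and bookkeeping.}
We proceed as follows. (i) Sample $\bar x \from \bit^n$ and fix $\hat x$ with prefix $\bar x$ and length a suitable multiple of $b$ so that $\roxpad(\hat x) = x_1\|\dots\|x_\ell$ has $\ell \ge i$ with block $i$ lying inside $\hat x$ (when $i$ is so small that block $i$ overlaps the prefix, we instead let $\bar x$ begin with $x^{(1)}$, spending at most $b$ bits of $\bar x$'s entropy), set $x_i := x^{(1)}$, and fill the remaining data blocks arbitrarily. (ii) Determine the padding blocks of $\hat x$ via the at most $\ceil{\frac{b+2n-1}{2n}}$ queries to $\RO_2$ called for by $\roxpad$, and compute $c_{i-1} := \ROXp{H}_k(x_1\|\dots\|x_{i-1})$; this costs $i-1$ applications of $H$ and the $i-1$ queries $\RO_1(\bar x,k,\nu(1)),\dots,\RO_1(\bar x,k,\nu(i-1))$. (iii) Reprogram $\RO_1(\bar x,k,\nu(i)) := c_{i-1}\oplus x^{(2)}$ and confirm with one more evaluation $H_k(x)$. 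The input to the $i$-th application of $H_k$ is then $x_i\|\bigl(c_{i-1}\oplus\RO_1(\bar x,k,\nu(i))\bigr) = x^{(1)}\|x^{(2)} = x$, and the cost is $i$ applications of $H$ and at most $\ceil{\frac{b+2n-1}{2n}}+i$ oracle queries, as claimed. There is a subtlety when $i$ is not a power of two: then $\nu(i)=\nu(j)$ for some $j<i$, so $c_{i-1}$ may depend on the very mask $\RO_1(\bar x,k,\nu(i))$ we are reprogramming; we deal with this case separately in Appendix~\ref{sec:ROX-lemmas}.

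\textbf{Detection bound and main obstacle.}
For the last claim, by the reprogramming lemma of~\cite{ES15} distinguishing the reprogrammed $\RO_1$ from the original one is no easier than the associated witness-search game whose secret is the reprogrammed location $(\bar x,k,\nu(i))$ --- equivalently $\bar x$, since $k$ is public and $\nu(i)$ is fixed --- and whose public information reveals nothing more about $\bar x$. A uniform $n$-bit string is found by a $q$-query quantum adversary with probability $O(q^2/2^n)$, so $\adv$ notices the change with at most that probability. (When $x$ is uniform, as in the reductions that invoke this lemma, the reprogrammed value $c_{i-1}\oplus x^{(2)}$ is also a uniform $d$-bit string conditioned on everything else, so pinning $\RO_1$ to it is consistent with $\RO_1$ being truly random, even after $\hat x$ --- and with it $\bar x$ --- has been handed to $\adv$.) The step we expect to demand the most care is exactly this reprogramming argument against \emph{superposition} queries: a classical ``$\adv$ never queried that point'' claim is unavailable, so the work is in setting up the witness-search instance --- routing the hardness through the hidden $\bar x$ rather than through the possibly-known $k$ --- so that the~\cite{ES15} bound applies; the bookkeeping for the non-power-of-two masks is a secondary complication.
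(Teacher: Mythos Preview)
Your proposal follows essentially the same route as the paper: split $x$ into a $b$-bit block part and a $d$-bit chaining part, build a random message whose $i$-th block is the block part and whose $n$-bit prefix $\bar x$ is fresh, program $\RO_1(\bar x,k,\nu(i))$ so that the masked chaining value becomes the chaining part, and bound detectability via the witness-search reprogramming argument of~\cite{ES15} routed through the hidden $\bar x$. The only notable differences are that the paper also programs $\RO_2$ (to handle the case where block $i$ overlaps the padding) rather than forcing block $i$ into the data portion as you do, and it does not flag the $\nu(i)$-reuse circularity you raise---it simply computes the chaining value before reprogramming and proceeds.
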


\begin{lemma}[Extracting a preimage under $H$ from a preimage under $\ROX{H}$]
\label{ROX-preimage-extraction}
Given a key $k$ and a message $\hat{x} \in \{0,1\}^*$ with
$\ROX{H}_k(\hat{x}) = y$, we can generate a message $x \in \msg$ with
$H_k(x) = y$ using $\ell(x)-1$ calls to $H$ and $q(x)$ oracle queries.
\end{lemma}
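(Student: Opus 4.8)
The plan is to unwind Definition~\ref{def:rox} and observe that the \emph{last} invocation of the compression function during the evaluation of $\ROX{H}_k(\hat{x})$ is already of the form $H_k(\cdot)=y$; its argument is therefore the preimage we seek. In contrast to Lemma~\ref{ROX-collision-extraction}, this extraction is deterministic given $k$, $\hat{x}$, and oracle access, so it carries no failure probability.

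First I would run $\roxpad$ on $\hat{x}$ to obtain the block decomposition $x_1\|\dots\|x_\ell = \roxpad(\hat{x})$, where $\ell = \ell(\hat{x})$ and each $x_j\in\bit^b$; generating the appended padding blocks costs $q_2(\hat{x})$ queries to $\RO_2$. Writing $h_0 = \IV$ and $h_i = \ROXp{H}_k(x_1\|\dots\|x_i)$, the recursion defining $\ROXp{H}_k$ gives
\[
        y \;=\; \ROX{H}_k(\hat{x}) \;=\; h_\ell \;=\; H_k\!\left(x_\ell \,\|\, \left(h_{\ell-1} \oplus \RO_1(\bar{x},k,\nu(\ell))\right)\right).
\]
So I would output $x := x_\ell \,\|\, \left(h_{\ell-1} \oplus \RO_1(\bar{x},k,\nu(\ell))\right)$. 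Then $H_k(x) = y$ by construction, and since $x_\ell\in\bit^b$ while the second component lies in $\bit^d$, we get $|x| = b+d = m$ (using $b=m-d$), so $x\in\msg$ as required. When $\ell = 1$ this degenerates to $x = x_1 \| (\IV \oplus \RO_1(\bar{x},k,0))$ with no compression calls, consistent with the stated bound.

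For the resource count: producing $x$ requires (i) computing $h_{\ell-1}$, i.e.\ running the ROX recursion for $i=1,\dots,\ell-1$, which uses $\ell-1$ applications of $H_k$ and the $\ell-1$ masks $\RO_1(\bar{x},k,\nu(i))$; (ii) one further query $\RO_1(\bar{x},k,\nu(\ell))$ for the last mask; and (iii) the $q_2(\hat{x})$ queries to $\RO_2$ already spent during padding. The closing XOR and concatenation use no further calls. Altogether this is $\ell(\hat{x})-1$ applications of $H$ and $q_2(\hat{x}) + (\ell(\hat{x})-1) + 1 = q(\hat{x})$ oracle queries, as claimed.

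There is no serious obstacle here; the argument is mostly bookkeeping. The points to watch are (a) confirming that the extracted $x$ has length exactly $m$ so it genuinely lies in $\msg$, and (b) aligning the query tally with the quantities $\ell$, $q_2$, $q$ from Definition~\ref{def:rox} --- in particular counting the $\RO_1$ mask of the last block exactly once (it is not reused inside the recursion). It is worth contrasting with Lemma~\ref{ROX-collision-extraction}: here we merely read off the input of a single compression call rather than reconciling two full evaluation transcripts, so no $2^{-n}$ term is incurred. Finally, when this lemma is invoked inside a quantum reduction, the ``oracle queries'' become quantum queries to the simulated QROs, but the extraction procedure itself operates on the classical string $\hat{x}$, so no additional quantum subtlety enters at this step.
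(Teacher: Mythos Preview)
Your proposal is correct and follows essentially the same approach as the paper's proof: run the ROX evaluation up to (but not including) the final compression call and output the argument of that last call. Your write-up is in fact more careful than the paper's---you correctly use $\nu(\ell)$ in the $\RO_1$ argument, verify $|x|=m$, and give an explicit query tally---whereas the paper's proof is a terse two-line construction.
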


We are now ready to prove that ROX preserves the properties from
Sect.~\ref{sec:defs} in the QRO model. To conserve space, we only
summarize our proofs here, providing the full proofs in Appendix
\ref{sec:ROX-pres}.

\begin{theorem}[ROX preserves $\qapre$]\label{ROX-aPre-pres}
If $H$ is $(t',\veps')$-$\qapre$, then $\ROX{H}$ is
$(t,\veps)$-$\qapre$ with
\begin{equation*}
    t = t' - \poly(n)\tau_H;\,
        \veps = \veps' + O(q^2/{2^d})
\end{equation*}
\end{theorem}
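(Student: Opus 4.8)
The plan is to build a reduction $\trans$ that turns a $\qapre$-adversary $(\adv_1,\adv_2)$ against $\ROX{H}$ into a $\qapre$-adversary $(\adv_1',\adv_2')$ against the compression function $H$. The overall strategy mirrors Andreeva et al.'s classical argument, but with the two random oracles $\RO_1,\RO_2$ replaced by \emph{simulated} quantum random oracles (via Zhandry's $2q$-wise independent technique, as recalled in the preliminaries) and with the classical ``lazy programming'' step replaced by the witness-search-based programming technique of~\cite{ES15}. Concretely: $\adv_1'$ runs $\adv_1$ to obtain a key $\hat k$ for $\ROX{H}$ together with its state; since a key for $\ROX{H}$ is just a key for $H$, it outputs the same $k=\hat k$ and packages $\adv_1$'s state (plus the descriptions of the simulated oracles) into its own state register $S'$. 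The challenger then samples $x'\from\msg$ and hands back $Y = U_{H(x')}(K\otimes\ket 0)$, i.e.\ (a superposition over keys of) the digest $y=H_k(x')$. Now $\adv_2'$ must turn this into a challenge digest for $\ROX{H}$ and feed it to $\adv_2$. Here I would use Lemma~\ref{ROX-message-embedding}: we reinterpret $y$ as the output of some full $\ROX{H}_k$ evaluation by embedding $x'$ as the input to a chosen application of $H_k$ inside a padded ROX message $\hat x'$, which requires \emph{programming} $\RO_1$ (and possibly $\RO_2$) so that the chaining values line up. We then run $\adv_2$ on the programmed oracles and the digest $y$ to get a preimage $\hat x$ with $\ROX{H}_k(\hat x)=y$, and finally apply Lemma~\ref{ROX-preimage-extraction} to pull out $x\in\msg$ with $H_k(x)=y$, which $\adv_2'$ measures and outputs.

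The key steps, in order: (1) specify $\adv_1'$ and the initialization of the simulated oracles $\RO_1,\RO_2$; (2) describe exactly which oracle point(s) get reprogrammed by the embedding of Lemma~\ref{ROX-message-embedding} and argue that, conditioned on a ``fresh'' randomizer $\bar{x}'$ (the first $n$ bits of $x'$), these points are unqueried by $\adv_2$ except with small probability; (3) invoke the programming lemma from~\cite{ES15} to bound the distinguishing advantage that $\adv_2$ gains from the reprogramming by the success probability of a suitable witness-search game, and bound that in turn by $O(q^2/2^d)$ (the relevant witness lives in the $d$-bit digest space, hence the $2^d$ rather than $2^n$); (4) run $\adv_2$ and apply Lemma~\ref{ROX-preimage-extraction}, noting its extraction is deterministic and costs only $\poly(n)$ extra calls to $H$; (5) collect the accounting: the running-time loss is $\poly(n)\tau_H$ for the embedding, extraction, and oracle simulation, and the advantage loss is the sum of the collision-type term from the embedding lemma ($O(q^2/2^n)$) and the programming term ($O(q^2/2^d)$); since $d\ge 2b$ and in the relevant regime $d\le n$ is \emph{not} assumed, but $2^d$ dominates as the smaller modulus, the bound is stated as $\veps=\veps'+O(q^2/2^d)$.

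The main obstacle is step (3): justifying that adaptively reprogramming the quantum random oracle $\RO_1$ at the point $(\bar{x}',k,\nu(i))$ — chosen so that the ROX chaining value feeding the $i$th compression call equals the value forced by the embedding — is undetectable to $\adv_2$. Classically one argues $\adv_2$ is unlikely to have queried that exact point because $\bar{x}'$ is (essentially) uniform and independent of $\adv_2$'s view until the challenge is released; quantumly, $\adv_2$'s single superposition query ``touches'' every point, so I must instead phrase the reprogrammed point as a hard-to-guess \emph{witness} and cite the $\ES15$ bound that ties detection probability to witness-search hardness. Care is needed because $\adv_2$ also receives $k$ and the digest $y$, so the witness-search instance must be set up with $pk$ containing exactly what $\adv_2$ legitimately sees, and with the secret witness being the randomizer $\bar{x}'$ (or the pair $(\bar{x}',$ reprogrammed value$)$); one must check that guessing this witness is no easier than inverting $H$ on a random digest, which is where the $\qapre$ assumption on $H$ re-enters and where the $O(q^2/2^d)$ slack comes from. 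A secondary subtlety is ensuring the various ROX internal consistency conditions ($d\ge 2b$, the padding-block count bound $q_2(x)$, the definition of $\nu(i)$) are respected when we choose the embedding index $i$, so that Lemmas~\ref{ROX-message-embedding} and~\ref{ROX-preimage-extraction} actually apply to the $\hat x'$ we construct; this is routine but must be stated. The equivalences $\qapre\equiv\sqapre$ established earlier let us work entirely with the classical-interface definition, so no additional care about quantum communication with the challenger is needed.
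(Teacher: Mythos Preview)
There is a genuine gap in your plan. In the $\qapre$ game against the compression function $H$, the reduction $\adv_2'$ receives only the digest $y=H_k(x')$; it never learns $x'$ itself. Your step (2)–(3) invokes Lemma~\ref{ROX-message-embedding} (\textsc{Embed-Message}) with input $x'$ and proposes to program $\RO_1$ at $(\bar{x}',k,\nu(i))$ where $\bar{x}'$ is ``the first $n$ bits of $x'$''. But \textsc{Embed-Message} needs the full compression-function input (both the block part and the chaining part of $x'$) in order to know what value to program, and you have neither. So the embedding step cannot be executed by the reduction as written. This is the $\qasec$ strategy—where the reduction \emph{is} handed the target message—grafted onto $\qapre$, where it is not.

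The paper's proof avoids this entirely: $\adv_2'$ simply hands the compression-function challenge digest $y$ \emph{directly} to the ROX adversary $B$ as if it were a ROX digest, with no embedding and no actual reprogramming in the reduction. The work is all in arguing that $y=H_k(\text{uniform})$ is indistinguishable from a genuine ROX challenge $\hat y=H_k(g\|h)$, where $g$ contains $\RO_2$-output and $h$ is a chaining value masked by $\RO_1$. That indistinguishability is established by a hybrid/thought experiment: if a hypothetical challenger who knew the ROX input $\hat x$ reprogrammed $\RO_{1,2}$ at the points indexed by $\bar{\hat x}$ to fresh randomness, then $g\|h$ would become uniform and $\hat y$ would be distributed exactly as $y$. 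The witness-search lemma from~\cite{ES15} bounds the adversary's ability to detect that hypothetical reprogramming by $O(q^2/2^d)$, because the witness is the random prefix $\bar{\hat x}$, which is independent of everything $B$ sees. After $B$ returns a ROX preimage, Lemma~\ref{ROX-preimage-extraction} alone suffices to extract a compression preimage. Note also that the witness-search bound here is purely information-theoretic in the random oracle; it does not ``re-enter'' the $\qapre$ assumption on $H$ as your last paragraph suggests.
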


\paragraph{Proof summary.} We use a preimage target $y$ for $H$ as a
preimage target for $\ROX{H}$. In the classical proof, $y$ is
correctly distributed because an adversary would have to guess
correctly some random points to query $\RO_{1,2}$. This argument fails
in the quantum setting. We instead use QRO programming to show that
$y$ appears correctly distributed to a quantum adversary.

\begin{theorem}[ROX preserves $\qpre$]\label{ROX-Pre-pres}
If $H$ is $(t',\veps')$-$\qpre$ then $\ROX{H}$ is $(t,\veps)$-$\qpre$,
where
\begin{equation*}
    t = t - \poly(n)\tau_H;\,
        \veps' = \veps + O(q^2/{2^d})
\end{equation*}
\end{theorem}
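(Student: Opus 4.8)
The plan is to reduce $\qpre$ of $\ROX{H}$ to $\qpre$ of $H$ by a straight-line simulation of the two random oracles combined with the preimage-extraction gadget of Lemma~\ref{ROX-preimage-extraction}. Concretely, suppose $\adv$ is a \QPT adversary that wins the $\qpre$ game for $\ROX{H}$ with advantage $\veps$. We build $\reduc$ playing the $\qpre$ game for $H$: on input $(1^n,k,y)$, the reduction picks a target length for the ROX preimage, simulates $\RO_1$ and $\RO_2$ efficiently via Zhandry's technique (using $2q$ samples for $q$ total queries, as recalled in the preliminaries), and runs $\adv(1^n,k,y)$ with $y$ as the ROX preimage target. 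When $\adv$ returns some $\hat x$ with $\ROX{H}_k(\hat x)=y$, the reduction applies Lemma~\ref{ROX-preimage-extraction} to recover $x\in\msg$ with $H_k(x)=y$ using $\ell(x)-1$ calls to $H$ and $q(x)$ oracle queries, and outputs $x$. The running time bookkeeping gives $t = t' - \poly(n)\tau_H$ (the simulation of the oracles and the extraction gadget each cost $\poly(n)$ hash evaluations plus oracle-query simulation), which is the claimed time bound; note the statement's "$t = t - \dots$, $\veps' = \veps + \dots$" is evidently a typo for "$t' = t - \poly(n)\tau_H$, $\veps = \veps' + O(q^2/2^d)$" in the direction consistent with Theorem~\ref{ROX-aPre-pres}.

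Unlike the $\qapre$ case, here the key $k$ is supplied \emph{by the challenger}, so $\adv$ does not choose it; this means we do \emph{not} need to program the quantum random oracle at all for the $\qpre$ reduction. The target $y$ is a uniformly random digest in both the real $\qpre$ game for $\ROX{H}$ and the simulated game, so its distribution is exactly correct. The only slack comes from (i) the statistical distance between Zhandry's simulated oracles and genuine random oracles, which is zero for the adversary's view but the extraction step introduces the usual $O(q^2/2^d)$-type error when we must argue that the recovered $x$ is a genuine preimage given only that $\hat x$ hashes correctly under the simulated oracles, and (ii) a negligible probability that the extraction gadget fails (the $\frac1{2^n}\le O(q^2/2^d)$-style event in the spirit of Lemma~\ref{ROX-collision-extraction}, subsumed into the $O(q^2/2^d)$ bound since $d\ge 2b$ and the relevant collision events happen among at most $q$ query points). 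Collecting these, $\Adv{\qpre}{H}(\reduc(\adv)) \ge \veps - O(q^2/2^d)$, which rearranges to $\veps = \veps' + O(q^2/2^d)$.

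The main obstacle I expect is \emph{consistency of the simulated oracles under message-length guessing}: the reduction must decide, before running $\adv$, how long a ROX preimage $\adv$ will produce (since $q$, $\ell$, and the padding-block count depend on $|\hat x|$), or else guess/iterate over the polynomially many possibilities, so that the claimed query count $q(x)$ and time bound are honored; handling this cleanly — either by a union bound over lengths or by a lazy extension of the simulated oracle's domain — is the delicate part. A secondary subtlety is ensuring that the preimage-extraction of Lemma~\ref{ROX-preimage-extraction} is itself implementable as a \emph{classical} post-processing step on $\adv$'s classical output $\hat x$ (it is, since the $\qpre$ definition forces $\adv$ to measure its output register), so no quantum rewinding or cloning is needed and the reduction runs $\adv$ in a single straight-line pass. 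Everything else is routine accounting, and the structure closely mirrors the classical proof in~\cite{ANPS07} with Zhandry's simulation substituted for classical lazy sampling.
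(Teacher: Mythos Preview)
There is a genuine gap. Your central claim---``the target $y$ is a uniformly random digest in both the real $\qpre$ game for $\ROX{H}$ and the simulated game, so its distribution is exactly correct''---is false. In the $\qpre$ game (equation~\ref{adv-Pre}) the challenger samples $x'\from\msg$ and sets $y=H_k(x')$; the target is \emph{not} uniform over $\dig$ but is the pushforward of the uniform input distribution through the hash. For $\ROX{H}$ this means $y=H_k\bigl(b_\ell\,\|\,\ROXp{H}_k(b_1\|\dots\|b_{\ell-1})\oplus\RO_1(\bar x,k,\nu(\ell))\bigr)$, where $b_\ell$ contains output of $\RO_2(\bar x,\cdot)$; whereas the $y$ your reduction receives from the $H$-challenger is $H_k(x'')$ for $x''$ uniform in $\dom(H_k)$. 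These two distributions coincide only if the last-block input in ROX is itself uniform, which it is not in general.

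Consequently your assertion that ``we do \emph{not} need to program the quantum random oracle at all for the $\qpre$ reduction'' misidentifies the role of programming. The QRO argument in the paper is not there because the adversary chooses the key in $\qapre$; it is there to argue that the adversary cannot distinguish $y=H_k(\text{uniform})$ from $y=H_k(\text{last ROX block})$, because the last block depends on $\RO_{1,2}$ evaluated at inputs containing the random prefix $\bar x$, which the adversary cannot have ``seen'' except with probability $O(q^2/2^d)$ (this is exactly the witness-search bound). The paper's proof summary for Theorem~\ref{ROX-Pre-pres} says explicitly that ``the argument is the same as that in the proof of Theorem~\ref{ROX-aPre-pres}'', i.e.\ the same QRO indistinguishability step is required and is the source of the $O(q^2/2^d)$ loss. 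Your ``main obstacle'' (message-length guessing) is a side issue by comparison; the missing piece is precisely the distributional argument for $y$ that you dismissed.
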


\paragraph{Proof summary.} An adversary for $\qpre$ on $\ROX{H}$ can
be run using a preimage target $y$ for $H$, since $y$ will appear to
be correctly distributed. The argument is the same as that in the
proof of Theorem~\ref{ROX-aPre-pres}, so we omit it here for brevity.

\begin{theorem}[ROX preserves $\qasec$]\label{ROX-aSec-pres}
If $H$ is $(t',\veps')$-$\qasec$ then $\ROX{H}$ is $(t,\veps)$-$\qasec$
with
\begin{equation*}
    t = t' - \poly(n)\tau_H;\,
        \veps = \poly(n)\veps'/(1 - 1/2^n)(1 - \poly(n)/2^n)
\end{equation*}
\end{theorem}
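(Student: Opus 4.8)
The plan is to build a reduction $\trans$ that converts a $\qasec$ adversary $(\adv_1,\adv_2)$ against $\ROX{H}$ into a $\qasec$ adversary $(\adv_1',\adv_2')$ against $H$. Recall the $\qasec$ game: the adversary first chooses a key $k$ (and side state $S$), then the challenger samples a target message $x'$ from $\msg$ and hands it over, and the adversary must produce $\hat x \neq x'$ with $H_k(\hat x)=H_k(x')$. For $\ROX{H}$, $\adv_1$ picks $k$ just as $\adv_1'$ will. The difficulty is that the challenger for $H$ gives us a target $x' \in \bit^m$ for the \emph{compression} function, not a target $\hat x'$ for $\ROX{H}$, so we must \emph{embed} $x'$ as the input to some application of $H_k$ inside an evaluation of $\ROX{H}_k(\hat x')$. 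This is exactly what Lemma~\ref{ROX-message-embedding} provides: given $x'$ and an index $i$, we can synthesize $\hat x'$ whose $i$th internal compression call takes input $x'$, at the cost of $i$ calls to $H$, $O(i)$ oracle queries, and a QRO-programming detection probability of $O(q^2/2^n)$. We run $\adv_2$ on this $\hat x'$; when $\adv_2$ returns a $\qasec$-collision $\hat x$ on $\ROX{H}_k$, we feed $(\hat x,\hat x')$ into Lemma~\ref{ROX-collision-extraction} to extract a collision $(x,x')$ on $H_k$, succeeding except with probability $1/2^n$.

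The index $i$ at which to embed is not known in advance, so $\adv_1'$ guesses it uniformly from $\{1,\dots,L\}$ (or more precisely from the at-most-$\poly(n)$ possible positions), which is where the $\poly(n)$ loss in the success probability comes from — this accounts for the $\poly(n)\veps'$ and the $1/\poly(n)$-style factors in the stated bound. The colliding block that the extraction lemma hands back must line up with the embedded position $x'$: when $\adv_2$ succeeds, $\hat x \neq \hat x'$ but $\ROX{H}_k(\hat x)=\ROX{H}_k(\hat x')$, and tracing the iteration backwards there must be a first point of divergence, yielding a collision on $H_k$; with probability $\geq 1/\poly(n)$ this first divergence involves the block we embedded $x'$ into, so the extracted collision is genuinely against the challenger's target $x'$ rather than some internally-sampled value, and in particular $x \neq x'$. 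Collecting the error terms: the embedding costs $O(q^2/2^n)$, the extraction costs $1/2^n$, the guessing costs a factor $\poly(n)$, and the running time loss is the $\poly(n)\tau_H$ needed for the extra $H$-evaluations inside both lemmas. Combining these multiplicatively/additively gives $\veps = \poly(n)\veps'/(1-1/2^n)(1-\poly(n)/2^n)$ and $t = t'-\poly(n)\tau_H$, matching the statement. Finally, since $\trans$ runs the ROX-adversary in a straight line and the interface with the challenger is classical, no additional quantum subtlety arises beyond the QRO programming already encapsulated in Lemma~\ref{ROX-message-embedding}.

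The main obstacle is making the alignment argument rigorous in the presence of QRO programming: Lemma~\ref{ROX-message-embedding} guarantees only that the \emph{view} of $\adv_2$ is statistically close to the real game, so I must argue that (i) conditioned on $\adv_2$ winning, the extracted $H_k$-collision falls on the guessed block with probability $\geq 1/\poly(n)$, and (ii) this conditioning interacts well with the $O(q^2/2^n)$ programming error — i.e., I should switch to the programmed-oracle world first, analyze the combinatorics of where the divergence occurs there, and only then transfer back. The second delicate point is handling the padding blocks and the $\RO_2$-derived suffix: because $\roxpad$ appends oracle-derived bits that depend on $\bar x'$ and $|x'|$, the embedded message $\hat x'$ has a constrained structure, and I need to ensure the divergence point can be steered to a non-padding block (or handle the padding-block case separately via $\RO_2$ collision-resistance, which is where a second $1/2^n$-type term may enter and be absorbed into the $(1-\poly(n)/2^n)$ factor).
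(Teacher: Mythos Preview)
Your proposal is correct and follows essentially the same approach as the paper: simulate $\RO_{1,2}$, forward the adversary's key choice, guess an index $i$, embed the compression-function target via Lemma~\ref{ROX-message-embedding}, run the ROX adversary, and extract an $H$-collision via Lemma~\ref{ROX-collision-extraction}, failing if the extracted index $i^*$ disagrees with the guess. The paper's own proof is in fact terser than yours---it simply asserts ``$i=i^*$ w.p.\ $1/\poly(n)$'' without the alignment discussion you flag as the main obstacle---so your identification of the two delicate points (conditioning on the win in the programmed world, and handling the padding blocks) is well-placed even if the paper does not spell them out.
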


\paragraph{Proof summary} We embed a second-preimage target for $H$
into a second-preimage target for $\ROX{H}$ by adaptively programming
$\RO_{1,2}$. We argue that reprogramming the random oracles in this
way is imperceptible to the adversary.

\begin{theorem}[ROX preserves $\qsec$]\label{ROX-Sec-pres}
If $H$ is $(t'\veps')$-$\qsec$ then $\ROX{H}$ is $(t,\veps)$-$\qsec$,
where
\begin{equation*}
    t = t' - \poly(n)\tau_H;\,
        \veps = \poly(n)\veps'/(1-1/2^n)(1-\poly(n)/2^n)
\end{equation*}
\end{theorem}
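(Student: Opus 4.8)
The plan is to mirror the proof of Theorem~\ref{ROX-aSec-pres} almost verbatim, the only simplification being that the key $k$ is now supplied by the challenger rather than produced by an adversary phase, so there is one fewer component to simulate. First I would set up a reduction $\reduc=(\Gext,\trans,\Gint)$ with $\Gext$ the $\qsec$ game on $H$ and $\Gint$ the $\qsec$ game on $\ROX{H}$. On receiving a challenge $(1^n,k,x')$ from the $\qsec$-challenger for $H$, $\trans$ draws a uniformly random index $i\in\{1,\dots,L\}$ (a guess for the block of the iteration at which the collision will surface), invokes Lemma~\ref{ROX-message-embedding} to build a target $\hat x'$ for $\ROX{H}_k$ that is distributed as a genuine $\qsec$-target for $\ROX{H}$ and whose $i$th invocation of $H_k$ receives exactly $x'$---which entails adaptively programming $\RO_{1,2}$ at the points the embedding touches---and then runs $\adv(1^n,k,\hat x')$, answering $\adv$'s superposition queries with the simulated, programmed $\RO_{1,2}$, to obtain $\hat x$. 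If $\hat x\neq\hat x'$ and $\ROX{H}_k(\hat x)=\ROX{H}_k(\hat x')$, $\trans$ runs the extraction procedure of Lemma~\ref{ROX-collision-extraction}, which localizes the resulting collision to one block of $\hat x'$ and one of $\hat x$; when the $\hat x'$-side is the embedded copy of $x'$---i.e.\ when the guess $i$ named the surfacing block---$\trans$ outputs the $\hat x$-side, which is a second preimage for $x'$ under $H_k$.

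Next I would argue correctness. As in the classical Merkle-Damg{\aa}rd/XOR-linear second-preimage analysis, walking an $\ROX{H}_k$-collision backward through the iteration produces a genuine collision on $H_k$ at some block of $\hat x'$, except on the exceptional event of Lemma~\ref{ROX-collision-extraction} (probability $1/2^n$, arising from a clash in the $\RO_2$-derived padding rather than in $H_k$ itself). Conditioned on the sampled $\hat x'$, the guess $i$ is uniform over the blocks of $\hat x'$ and---by the closeness guarantee of Lemma~\ref{ROX-message-embedding}---essentially independent of $\adv$'s behavior, so it hits that block with probability at least $1/L=1/\poly(n)$; in that case $\trans(\adv)$ wins $\Gext$. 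Folding in the $1/2^n$ term and the $O(q^2/2^n)$ probability that $\adv$ detects the reprogramming, one obtains a bound of the shape $\Adv{\qsec}{H}(\trans(\adv))\gtrsim\tfrac{1}{\poly(n)}(1-\tfrac{1}{2^n})(1-\tfrac{\poly(n)}{2^n})\,\Adv{\qsec}{\ROX{H}}(\adv)$, which rearranges to the claimed $\veps=\poly(n)\veps'/(1-1/2^n)(1-\poly(n)/2^n)$; since $\trans$ runs $\adv$ once and adds only $\poly(n)$ evaluations of $H$ (Lemmas~\ref{ROX-message-embedding} and~\ref{ROX-collision-extraction}) together with the cost of simulating $\RO_{1,2}$, we get $t=t'-\poly(n)\tau_H$.

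The hard part---exactly as in Theorem~\ref{ROX-aSec-pres}---is the single step where the classical argument breaks: showing that the crafted target $\hat x'$, together with the reprogrammed oracles, looks to $\adv$ like an honest $\qsec$ instance for $\ROX{H}$. Classically this is immediate, since a \PPT adversary almost never queries $\RO_{1,2}$ on the reprogrammed points; but one quantum query in superposition touches the entire domain, so lazy reprogramming is unavailable. The fix, carried out inside the proof of Lemma~\ref{ROX-message-embedding} and completed in Appendix~\ref{sec:ROX-pres}, is to route the reprogramming through the QRO-programming/witness-search technique of~\cite{ES15}, which bounds $\adv$'s distinguishing advantage by $O(q^2/2^n)$; one also has to re-check that the otherwise purely classical backward-tracing argument still pins an $\ROX{H}_k$-collision to a single $H_k$-collision in the presence of the XOR masks $\RO_1(\bar x,k,\nu(i))$. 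Everything else is bookkeeping, and full details appear in Appendix~\ref{sec:ROX-pres}.
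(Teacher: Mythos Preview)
Your proposal is correct and follows the same high-level structure as the paper: embed the $H$-target $x'$ into a block of an $\ROX{H}$-target via Lemma~\ref{ROX-message-embedding}, run $\adv$, then localize the resulting collision with Lemma~\ref{ROX-collision-extraction} and hope the guessed index matches.

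The one point where you diverge from the paper is in what you call ``the hard part.'' You treat the indistinguishability of the programmed oracles exactly as in Theorem~\ref{ROX-aSec-pres}, invoking the adaptive QRO-programming/witness-search machinery of~\cite{ES15}. The paper, by contrast, observes that this is precisely where $\qsec$ is \emph{easier} than $\qasec$: in $\qsec$ there is no adversary phase before $(k,x')$ is fixed, so the reduction can program $\RO_{1,2}$ \emph{before} $\adv$ makes a single query. Non-adaptive programming of a random oracle is trivially undetectable---you simply instantiate the simulated oracle with the desired values from the outset---so the witness-search argument is not needed at all. Your route still works (adaptive programming subsumes non-adaptive), but you are carrying heavier machinery than necessary and mislabeling the step as the crux of the argument; the paper's version reduces that step to a triviality.
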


\paragraph{Proof summary.} Similarly to Theorem~\ref{ROX-aSec-pres},
here we embed a second-preimage target for $H$ into one for $\ROX{H}$
by programming $\RO_{1,2}$. Since we do not need to program
adaptively, however, the programming is straightforward. The proof is
similar to that of Theorem~\ref{ROX-aSec-pres}, so we omit it here for
brevity.

\bibliographystyle{alpha}
\bibliography{PQCrypto_2019_paper_13}

\newcommand{\etalchar}[1]{$^{#1}$}
\begin{thebibliography}{KLLNP16}

\bibitem[AMRS18]{AMRS18}
Gorjan Alagic, Christian Majenz, Alexander Russell, and Fang Song.
\newblock Quantum-secure message authentication via blind-unforgeability.
\newblock {\em arXiv preprint arXiv:1803.03761}, 2018.

\bibitem[ANPS07]{ANPS07}
Elena Andreeva, Gregory Neven, Bart Preneel, and Thomas Shrimpton.
\newblock Seven-property-preserving iterated hashing: {ROX}.
\newblock In {\em International Conference on the Theory and Application of
  Cryptology and Information Security}, pages 130--146. Springer, 2007.

\bibitem[AR17]{AR17}
Gorjan Alagic and Alexander Russell.
\newblock Quantum-secure symmetric-key cryptography based on hidden shifts.
\newblock In {\em Advances in Cryptology -- EUROCRYPT 2017}, pages 65--93.
  Springer, 2017.

\bibitem[ARU14]{ARU14}
Andris Ambainis, Ansis Rosmanis, and Dominique Unruh.
\newblock Quantum attacks on classical proof systems: The hardness of quantum
  rewinding.
\newblock In {\em Foundations of Computer Science (FOCS), 2014 IEEE 55th Annual
  Symposium on}, pages 474--483. IEEE, 2014.

\bibitem[BDF{\etalchar{+}}11]{BDF+11}
Dan Boneh, {\"O}zg{\"u}r Dagdelen, Marc Fischlin, Anja Lehmann, Christian
  Schaffner, and Mark Zhandry.
\newblock Random oracles in a quantum world.
\newblock In {\em Advances in Cryptology -- ASIACRYPT 2011}, pages 41--69.
  Springer, 2011.

\bibitem[BDPA07]{BDPVA07}
Guido Bertoni, Joan Daemen, Michaël Peeters, and Gilles~Van Assche.
\newblock Sponge functions.
\newblock In {\em Ecrypt Hash Workshop}, 2007.
\newblock \url{http://sponge.noekeon.org/}.

\bibitem[BR97]{BR97}
Mihir Bellare and Phillip Rogaway.
\newblock Collision-resistant hashing: Towards making uowhfs practical.
\newblock In {\em Advances in Cryptology -- CRYPTO 1997}, page 470. Springer,
  1997.

\bibitem[BZ13]{BZ13b}
Dan Boneh and Mark Zhandry.
\newblock Secure signatures and chosen ciphertext security in a quantum
  computing world.
\newblock In {\em Advances in Cryptology -- CRYPTO 2013}, pages 361--379.
  Springer, 2013.

\bibitem[CBH{\etalchar{+}}18]{CBHSU18}
Jan Czajkowski, Leon~Groot Bruinderink, Andreas H{\"u}lsing, Christian
  Schaffner, and Dominique Unruh.
\newblock Post-quantum security of the sponge construction.
\newblock In {\em International Conference on Post-Quantum Cryptography}, pages
  185--204. Springer, 2018.

\bibitem[Dam89]{Dam89}
Ivan~Bjerre Damg{\aa}rd.
\newblock A design principle for hash functions.
\newblock In {\em Advances in Cryptology -- CRYPTO 1989}, pages 416--427.
  Springer, 1989.

\bibitem[ES15]{ES15}
Edward Eaton and Fang Song.
\newblock Making existential-unforgeable signatures strongly unforgeable in the
  quantum random-oracle model.
\newblock In {\em 10th Conference on the Theory of Quantum Computation,
  Communication and Cryptography, {TQC} 2015}, volume~44 of {\em LIPIcs}, pages
  147--162. Schloss Dagstuhl, 2015.
\newblock \url{https://eprint.iacr.org/2015/878}.

\bibitem[HRS16]{HRS16}
Andreas H{\"u}lsing, Joost Rijneveld, and Fang Song.
\newblock Mitigating multi-target attacks in hash-based signatures.
\newblock In {\em Public-Key Cryptography -- PKC 2016}, pages 387--416.
  Springer, 2016.

\bibitem[KLLNP16]{KLLNP16}
Marc Kaplan, Ga{\"e}tan Leurent, Anthony Leverrier, and Mar{\'\i}a
  Naya-Plasencia.
\newblock Breaking symmetric cryptosystems using quantum period finding.
\newblock In {\em Advances in Cryptology -- Crypto 2016}, pages 207--237.
  Springer, 2016.

\bibitem[Mer89]{Merkle89}
Ralph~C Merkle.
\newblock One way hash functions and {DES}.
\newblock In {\em Advances in Cryptology -- CRYPTO 1989}, volume 435, pages
  428--446. Springer, 1989.

\bibitem[NIS15]{NISTsha}
Secure hash standard {(SHS)} \& {SHA-3} standard.
\newblock FIPS PUB 180-4 \& 202, 2015.
\newblock \url{http://nvlpubs.nist.gov/nistpubs/FIPS/NIST.FIPS.180-4.pdf} \&
  \url{http://nvlpubs.nist.gov/nistpubs/FIPS/NIST.FIPS.202.pdf}.

\bibitem[RS04]{RS04}
Phillip Rogaway and Thomas Shrimpton.
\newblock Cryptographic hash-function basics: Definitions, implications, and
  separations for preimage resistance, second-preimage resistance, and
  collision resistance.
\newblock In {\em International workshop on fast software encryption}, pages
  371--388. Springer, 2004.

\bibitem[Sho00]{Shoup00}
Victor Shoup.
\newblock A composition theorem for universal one-way hash functions.
\newblock In {\em Advances in Cryptology -- EUROCRYPT 2000}, pages 445--452.
  Springer, 2000.

\bibitem[Son14]{Song14}
Fang Song.
\newblock A note on quantum security for post-quantum cryptography.
\newblock In {\em Post-Quantum Cryptography -- 6th International Workshop,
  {PQCrypto} 2014}, volume 8772 of {\em Lecture Notes in Computer Science},
  pages 246--265. Springer, 2014.

\bibitem[SS17]{SS16}
Thomas Santoli and Christian Schaffner.
\newblock Using {S}imon's algorithm to attack symmetric-key cryptographic
  primitives.
\newblock {\em Quantum Inf. Comput.}, 17(1{\&}2):65--78, 2017.

\bibitem[Unr12]{Unruh12}
Dominique Unruh.
\newblock Quantum proofs of knowledge.
\newblock In {\em Advances in Cryptology -- EUROCRYPT 2012}, volume 7237, page
  135. Springer, 2012.

\bibitem[Unr14]{Unruh14}
Dominique Unruh.
\newblock Quantum position verification in the random oracle model.
\newblock In {\em Advances in Cryptology -- CRYPTO 2014}, volume 8617 of {\em
  LNCS}, pages 1--18. Springer, August 2014.

\bibitem[Unr16a]{Unruh16_coll}
Dominique Unruh.
\newblock Collapse-binding quantum commitments without random oracles.
\newblock In {\em International Conference on the Theory and Application of
  Cryptology and Information Security}, pages 166--195. Springer, 2016.

\bibitem[Unr16b]{Unruh16_comm}
Dominique Unruh.
\newblock Computationally binding quantum commitments.
\newblock In Marc Fischlin and Jean-S{\'e}bastien Coron, editors, {\em Advances
  in Cryptology -- EUROCRYPT 2016}, pages 497--527. Springer, 2016.

\bibitem[Wat09]{Watrous09}
John Watrous.
\newblock Zero-knowledge against quantum attacks.
\newblock {\em SIAM Journal on Computing}, 39(1):25--58, 2009.

\bibitem[Zha12a]{Zhandry12_qprf}
Mark Zhandry.
\newblock How to construct quantum random functions.
\newblock In {\em FOCS 2012}, pages 679--687. IEEE, 2012.

\bibitem[Zha12b]{Zhandry2012}
Mark Zhandry.
\newblock Secure identity-based encryption in the quantum random oracle model.
\newblock In Reihaneh Safavi-Naini and Ran Canetti, editors, {\em Advances in
  Cryptology -- {CRYPTO} 2012}, pages 758--775. Springer Berlin Heidelberg,
  2012.

\end{thebibliography}

\appendix

\section{$\claps$ separation proofs}
\label{sec:CLAPS-seps}

Here we show that $\claps$ does not imply $\qapre$, $\qasec$, or
$\qepre$, completing the diagram in Fig.~\ref{properties}.

\subsection{$\claps \sep \qapre$}
\label{sec:CLAPS-aPre-sep}
\begin{theorem}\label{CLAPS-aPre-sep}
If a collapsing function family exists, then there is a function
family $H$ that is $\ts$-$\claps$ with negligible $\veps$, but with
$\Adv{\qapre}{H} = 1$.
\end{theorem}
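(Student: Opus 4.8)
The construction is the classical $\coll$-versus-$\apre$ separation of~\cite{RS04}, and the extra work is to show that it is collapsing rather than merely collision resistant. Let $G$ be a collapsing family, which exists by hypothesis; assume for convenience that its key space is $\bit^n$ (legitimate, as the security parameter is the key length), with message space $\msg$ and digest space $\dig$, and fix an arbitrary digest $y_0\in\dig$ and message $x_0\in\msg$. Define $H$ on the same spaces by $H_k = G_k$ for $k\neq 0^n$ and $H_{0^n}(x) = y_0$ for every $x$: we ``puncture'' $G$ at the single key $0^n$, replacing it there by a constant map. Then $H$ is \QPT-computable with $\tau_H = \tau_G + O(n)$.

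That $\Adv{\qapre}{H}=1$ is immediate and purely classical. Take the adversary $(A,B)$ where $A(1^n)$ outputs the key $0^n$ and an empty state register, and $B(1^n,S,y)$ outputs $x_0$; in the $\qapre$ game the challenger samples $x'\from\msg$, sets $y = H_{0^n}(x') = y_0$, and hands $y$ to $B$, and since $H_{0^n}(x_0)=y_0=y$ the adversary wins with certainty. This uses exactly that $\qapre$ is an ``always'' notion in which the adversary picks the key.

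The new ingredient is to show that $H$ is $\ts$-$\claps$ with negligible $\veps$, which I would do by a direct reduction to the collapsing security of $G$ (the classical lifting lemma does not apply, since $\claps$ is not one of the seven classical properties). Given a \QPT $\claps$ adversary $(A,B)$ for $H$, build a \QPT $\claps$ adversary $(A',B')$ for $G$ that runs $(A,B)$ verbatim on every key $k\neq 0^n$---legitimate because $H_k = G_k$ there, so correctness is inherited---and on the key $0^n$ has $A'$ output a canonical correct answer, say $X=\ket{x_0}$, $y=G_{0^n}(x_0)$, and an empty state, with $B'$ outputting a fixed bit. On key $0^n$ the register $X$ is a computational-basis state, so whether it is measured is invisible to $B'$; hence $k=0^n$ contributes $0$ to $\Adv{\claps}{G}(A',B')$, whereas its contribution to $\Adv{\claps}{H}(A,B)$ is at most $2^{-n}$ (a probability difference is at most $1$ and the key is uniform on a set of size $2^n$). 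Therefore $\Adv{\claps}{H}(A,B)\leq\Adv{\claps}{G}(A',B')+2^{-n}\leq\Adv{\claps}{G}+2^{-n}=\negl(n)$; concretely, if $G$ is $(t,\veps_G)$-$\claps$ then $H$ is $(t-O(\tau_G+n),\,\veps_G+2^{-n})$-$\claps$.

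I do not anticipate a serious obstacle; the one point deserving care is the last step---checking that the constructed $G$-adversary is \emph{correct} and that the punctured key is harmless, both of which work because $\claps$, like $\qcoll$, $\qpre$, and $\qsec$, is a random-key notion, so the bad key costs the adversary only $2^{-n}$. (The same observation explains why this construction does \emph{not} separate $\claps$ from the ``everywhere'' notion $\qepre$, where the adversary fixes the target before seeing the random key; that separation needs a different puncturing.)
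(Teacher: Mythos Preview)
Your proposal is correct and takes essentially the same approach as the paper: the same punctured construction $H_{0^n}\equiv\text{const}$, the same trivial $\qapre$ attack, and the same reduction showing $\Adv{\claps}{H}\leq\Adv{\claps}{G}+2^{-n}$. The only cosmetic difference is that on the bad key $0^n$ the paper's reduction simply aborts, whereas you output a canonical correct $G$-instance; your handling is slightly cleaner since it makes unconditional correctness of $(A',B')$ evident.
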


\paragraph{Proof summary:} Since the key $k$ in the collapsing game is
chosen uniformly at random, a collapsing function can have a constant
number of ``bad'' keys that, for example, result in a constant
function $H_k$. Finding a preimage for such a function is obviously
trivial. Given a collapsing function $F$, we exhibit a function $H$
that is collapsing, but for which $\Adv{\qapre}{H} = 1$.

\begin{proof}

Suppose $F : \{0,1\}^n \times \{0,1\}^m \to \{0,1\}^{d}$ is
$\ts$-$\claps$, and define $H$ as follows:
        \[H_k(x) =
                \begin{cases}
                        0^d    & \text{if } k = 0^n
                \\      F_k(x) & \text{o.w.}
                \end{cases}\]

\begin{lemma}\label{CLAPS-aPre-sep-lemma1}
$H$ is $(t,\veps')$-$\claps$, where $\veps'=\veps-\frac{1}{2^n}$.
\end{lemma}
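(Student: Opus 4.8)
The plan is to argue the contrapositive by a key-by-key reduction to the collapsing security of $F$. Fix an arbitrary \QPT collapsing adversary $(A,B)$ for $H$ running in time at most $t$; I want to show $\Adv{\claps}{H}(A,B)\le\veps-2^{-n}$. The only place $H$ and $F$ differ is the single key $0^n$, where $H_{0^n}$ is the constant map $0^d$ while $F_{0^n}$ is whatever the collapsing family prescribes; for every other key $H_k=F_k$. Since the challenger samples $k\from\ksp$ uniformly and each game depends on the function only through $H_k$, I would first write the advantage as an average over the key: setting $\delta_G(k)=\Pr[b=1:\Game_1,k]-\Pr[b=1:\Game_2,k]$ for a family $G$ and a fixed key $k$, we have $\Adv{\claps}{G}(A,B)=\left|\mathbb{E}_{k\from\ksp}[\delta_G(k)]\right|$, and crucially $\delta_H(k)=\delta_F(k)$ for every $k\neq 0^n$.

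Next I would build a \QPT adversary $(A',B')$ against $F$ that matches $(A,B)$ on the $(1-2^{-n})$ fraction of good keys and, on the bad key $0^n$, plays a dedicated collapsing attack on $F_{0^n}$ in place of $(A,B)$ (which is not even a correct $F$-adversary there, since it outputs $y=0^d$ for an arbitrary superposition on $X$). Because collapsing is only an average-case guarantee over the key, $F_{0^n}$ itself may be far from collapsing, so this substitution lets $(A',B')$ convert the $2^{-n}$ probability mass at $0^n$ into genuine distinguishing advantage on $F$. Accounting term by term, the good-key contributions of $(A',B')$ and $(A,B)$ coincide while the $0^n$ term contributes an additional $2^{-n}$ to $(A',B')$, so that $\Adv{\claps}{F}(A',B')\ge\Adv{\claps}{H}(A,B)+2^{-n}$. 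Since $F$ is $(t,\veps)$-$\claps$ and $(A',B')$ runs within time $t$, the left-hand side is at most $\veps$; rearranging gives $\Adv{\claps}{H}(A,B)\le\veps-2^{-n}$, and as $(A,B)$ was arbitrary, $H$ is $(t,\veps-2^{-n})$-$\claps$.

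The main obstacle is the bad-key bookkeeping under the correctness constraint of the collapsing game. The given adversary is guaranteed correct only against $H$, and at $k=0^n$ its output $(S,X,y=0^d)$ need not satisfy $F_{0^n}(\Meas(X))=y$, so I cannot forward it to the $F$-challenger verbatim; I must instead supply a correct $F$-attack at $0^n$ and show that it recovers precisely the $2^{-n}$ mass as advantage rather than merely bounding it crudely. Making this rigorous—pinning down the attack on $F_{0^n}$ and verifying that the per-key decomposition is an exact identity rather than only a one-sided estimate—is the delicate step, and it is exactly where the sign of the $2^{-n}$ correction in $\veps'$ is determined.
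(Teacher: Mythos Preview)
Your proposal has a genuine gap, and it stems from taking the lemma's sign at face value. The paper's own proof does \emph{not} establish $\veps'=\veps-2^{-n}$; it constructs $(A',B')$ that simply aborts when $k=0^n$ and runs $(A,B)$ otherwise, and concludes $\Adv{\claps}{F}(A',B')\ge\veps'-2^{-n}$, hence $\Adv{\claps}{H}\le\Adv{\claps}{F}+2^{-n}$. In other words, the bound actually proved is $\veps'=\veps+2^{-n}$; the minus sign in the statement is a typo. This is not a cosmetic issue: the literal statement you are trying to prove is false. Take any $F$ that is perfectly collapsing ($\veps=0$); then the lemma as stated would assert $\Adv{\claps}{H}\le -2^{-n}$, yet the adversary that on key $0^n$ outputs $y=0^d$ together with, say, $X=\tfrac{1}{\sqrt{2}}(\ket{0^m}+\ket{1^m})$ and then tests for superposition achieves advantage exactly $2^{-n}$ against $H$.

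Your reduction inherits this problem at exactly the point you flag as ``delicate.'' You propose to replace $(A,B)$'s behaviour at $k=0^n$ by a ``dedicated collapsing attack on $F_{0^n}$'' that converts the $2^{-n}$ mass into genuine advantage against $F$. But the $(t,\veps)$-collapsing assumption on $F$ is only an average over a uniformly random key; it says nothing about $F_{0^n}$ in isolation. In particular $F_{0^n}$ may be injective, in which case every correct collapsing adversary has $\delta_F(0^n)=0$ and no substitution can manufacture positive advantage there. Even when $F_{0^n}$ is maximally non-collapsing, you would be replacing one term $\tfrac{1}{2^n}\delta_H(0^n)$ (which may already equal $\tfrac{1}{2^n}$) by another term of at most the same magnitude, so the claimed strict gain of $2^{-n}$ cannot be an identity. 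The fix is to drop the attempt to squeeze advantage out of the bad key: abort on $k=0^n$, accept the $2^{-n}$ loss, and prove the bound with the plus sign, exactly as the paper's proof does.
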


The obvious adversary suffices to break $\qapre$ on $H$: $A$ picks
$k=0^n$, and $B$ outputs any $x$. Since we assume $\veps$ to be
negligible, $\veps'$ is negligible as well, and the theorem is
immediate from Lemma \ref{CLAPS-aPre-sep-lemma1}.

\end{proof}

\begin{proof}[Proof of Lemma \ref{CLAPS-aPre-sep-lemma1}]
Let $(A,B)$ be a correct \QPT adversary with $\Adv{\claps}{H}(A,B) =
\veps'$. We construct a correct \QPT adversary $(A',B')$ for $F$ as
follows:

\grayframe{Constructing $(A',B')$ from $(A,B)$}{
\begin{minipage}[t]{0.48\textwidth}
\noindent $A'(1^n,k)$:
\begin{enumerate}
\item If $k = 0^n$, FAIL. Otherwise\dots
\item Run $A(1^n,k)$ to get $S,X,y$ with $\Pr[H_k(\Meas(X))=y] = 1$.
\item Send the challenger $S,X,y$.
\end{enumerate}
\end{minipage}
\hfill
\begin{minipage}[t]{0.48\textwidth}
\noindent $B'(1^n,S,X)$:
\begin{enumerate}
\item Run $B(1^n,S,X)$ to get $b$.
\item Send $b$ to the challenger.
\end{enumerate}
\end{minipage}
}

We claim that $(A',B')$ is correct: If $A'$ does not fail, then $k
\not= 0^n$, and by the premise that $(A,B)$ is correct,
$\Pr[F_k(\Meas(X))=y] = \Pr[H_k(\Meas(X))=y | k\not=0] = 1$.

Suppose $A'$ doesn't fail, and the challenger for $(A',B')$ is running
$\Game_1$. Then $B$ receives $S,\Meas(X)$, and $(A,B)$ sees $\Game_1$.
On the other hand, if the challenger for $(A',B')$ is running
$\Game_2$, $X$ is unmeasured, and $(A,B)$ sees $\Game_2$.

Since the probability that $A'$ fails is $\Pr[k=0^n] = \frac{1}{2^n}$,
$\Adv{\claps}{F}(A',B')\geq\veps'-\frac{1}{2^n}$. Therefore
$\Adv{\claps}{H} \leq \Adv{\claps}{F} + \frac{1}{2^n} = \negl(n)$.

\end{proof}

\subsection{$\claps \sep \qasec$}
\label{sec:CLAPS-aSec-sep}
\begin{theorem}\label{CLAPS-aSec-sep}
If a collapsing function family exists, then there is a function
family $H$ that is $\ts$-$\claps$ with negligible $\veps$, but with
$\Adv{aSec}{H} = 1$.
\end{theorem}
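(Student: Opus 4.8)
The plan is to mirror the structure of the preceding $\claps\sep\qapre$ separation (Theorem~\ref{CLAPS-aPre-sep}) almost verbatim, since the two arguments differ only in which trivial attack is exploited. Given a collapsing function family $F:\{0,1\}^n\times\{0,1\}^m\to\{0,1\}^d$, I would define the same modified family
\[
  H_k(x)=\begin{cases}0^d & k=0^n\\ F_k(x) & \text{o.w.}\end{cases}
\]
(assuming $|\msg|\geq 2$ so that a distinct second preimage exists, which holds for the nontrivial domains we care about). First I would state and prove the analog of Lemma~\ref{CLAPS-aPre-sep-lemma1}: $H$ is $(t,\veps')$-$\claps$ with $\veps'=\veps-\tfrac{1}{2^n}$. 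The proof is identical to the one already given --- take a correct $\claps$ adversary $(A,B)$ for $H$, build $(A',B')$ for $F$ that aborts when handed $k=0^n$ (losing only $\Pr[k=0^n]=2^{-n}$ in advantage) and otherwise runs $(A,B)$ unchanged, using that on $k\neq 0^n$ the functions $H_k$ and $F_k$ coincide so correctness and the $\Game_1/\Game_2$ views are preserved. Hence $\Adv{\claps}{H}\leq\Adv{\claps}{F}+2^{-n}=\negl(n)$.

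Next I would exhibit the trivial $\qasec$ attack on $H$: the adversary $(A,B)$ has $A(1^n)$ output $k=0^n$ (and an empty state $S$); the challenger then samples $x'\from\msg$ and gives it to $B$, who outputs any $x\neq x'$ in $\msg$. Since $H_{0^n}$ is the constant function $0^d$, we have $H_{0^n}(x)=0^d=H_{0^n}(x')$ with certainty, so $\Adv{\qasec}{H}(A,B)=1$. The theorem then follows immediately: $\veps$ negligible implies $\veps'$ negligible, so $H$ is $\ts$-$\claps$ with negligible parameter while $\Adv{\qasec}{H}=1$.

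There is essentially no obstacle here --- the only thing to be mildly careful about is the domain assumption ($|\msg|\ge 2$, so that a second preimage distinct from the challenge $x'$ exists), which is implicit throughout the paper since otherwise $\qasec$ is vacuous. I would phrase the write-up so as to reuse Lemma~\ref{CLAPS-aPre-sep-lemma1} directly if the bad key set $\{0^n\}$ is literally the same, noting that the $\claps$ bound proven there does not depend on what attack is mounted on the modified function, only on the structure of $H$; this lets the proof of Theorem~\ref{CLAPS-aSec-sep} consist of just the one-line attack plus a pointer back to the earlier lemma.
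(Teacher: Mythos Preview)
Your proposal is correct and matches the paper's approach essentially verbatim: the paper also reuses the same $H$ from Theorem~\ref{CLAPS-aPre-sep}, invokes the already-proven collapsing bound (Lemma~\ref{CLAPS-aPre-sep-lemma1}), and notes that choosing the bad key $0^n$ makes any $x\neq x'$ a valid second preimage. The paper in fact omits the proof entirely with a pointer back to Theorem~\ref{CLAPS-aPre-sep}, exactly as you suggest doing in your final paragraph.
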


\paragraph{Proof summary:} This proof uses the same $H$ as the proof
of Theorem~\ref{CLAPS-aPre-sep}. Given a collapsing function $F$, this
$H$ is also collapsing, but has a single ``bad'' key $\hat{k}$ s.t.
$H_{\hat{k}}(x) = 0^d$, $\forall x \in \{0,1\}^m$. If an adversary can
control the key and chooses $\hat{k}$, then whichever $x$ the
challenger picks, any other element of $\dom(H_k)$ is a second
preimage. Since the proof is identical to
Theorem~\ref{CLAPS-aPre-sep}, we omit it for brevity.

\subsection{$\claps \sep \qepre$}
\label{sec:CLAPS-ePre-sep}
\begin{theorem}\label{CLAPS-ePre-sep}
If a collapsing function family exists, then there is a function
family $H$ that is $\ts$-$\claps$ with negligible $\veps$, but
$\Adv{ePre}{H} = 1$.
\end{theorem}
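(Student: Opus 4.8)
The plan is to imitate the constructions behind Theorems~\ref{CLAPS-aPre-sep} and~\ref{CLAPS-aSec-sep}, swapping the roles of the ``bad'' key and the ``bad'' input. In the everywhere-preimage game the adversary commits to a digest target $y$ \emph{before} seeing the key, so instead of a key on which $H_k$ degenerates we want a single message that hashes to a fixed digest under \emph{every} key. Concretely, given a $\ts$-$\claps$ family $F:\bit^n\times\bit^m\to\bit^d$, I would define $H:\bit^n\times\bit^m\to\bit^{d+1}$ by $H_k(0^m)=0^{d+1}$ and $H_k(x)=1\|F_k(x)$ for $x\neq 0^m$. The prepended $1$ puts $0^{d+1}$ outside the range of the ``$F$ part'', so that $H_k^{-1}(0^{d+1})=\{0^m\}$ for every key $k$.

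The attack on $\qepre$ is then immediate and entirely classical: the first stage outputs the target $y=0^{d+1}$ with empty state, and for whatever key $k$ is revealed the second stage outputs $x=0^m$; since $H_k(0^m)=0^{d+1}$ always, $\Adv{\qepre}{H}=1$. The substance of the proof is a lemma that $H$ inherits collapsing from $F$, and I expect this essentially without loss --- $\Adv{\claps}{H}(A,B)=\Adv{\claps}{F}(A',B')$ up to a one-time $\tau_F$-time overhead in the reduction --- so that $H$ is $\ts$-$\claps$ with negligible error because $\veps$ is.

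For that lemma I would take a correct \QPT collapsing adversary $(A,B)$ for $H$, run $A$ on the received key to obtain $(S,X,y)$, forward $(S,X)$ together with a suitable target to $F$'s challenger, and relay $B$'s guess. The one delicate point is which target $A'$ declares for $F$. If $y=1\|y'$, then $X$ is supported on $H_k^{-1}(1\|y')=F_k^{-1}(y')\setminus\{0^m\}\subseteq F_k^{-1}(y')$, so declaring $y'$ keeps the reduction correct and the $F$ game runs identically to the $H$ game on this branch. If $y=0^{d+1}$, then because $H_k^{-1}(0^{d+1})=\{0^m\}$ is a singleton and $(A,B)$ is correct, the register $X$ is forced to be the computational-basis state $\ket{0^m}$; this branch is measurement-invariant, hence contributes $0$ to the collapsing advantage, and $A'$ may safely declare $F_k(0^m)$. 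All other values of $y$ have empty preimage set and occur with probability $0$ for a correct adversary. Since the state passed to $B$ and the challenger's measure/don't-measure action then coincide in the $F$ game and the $H$ game, $(A',B')$ wins against $F$ with exactly the advantage $(A,B)$ has against $H$.

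The main obstacle --- and the reason for the extra output bit --- is precisely this case analysis: the modified input $0^m$ must be sent to a digest lying \emph{outside} the range of the unmodified part, so that on the ``bad'' target the collapsing reduction is handed a singleton (hence classical) preimage set and is never forced to claim that $0^m$ is a preimage of some digest under $F$. Without the extra bit one would have to prepend a $\{\kera{0^m},\,\Id-\kera{0^m}\}$ measurement to the reduction, which is invisible in $\Game_1$ but disturbs the state in $\Game_2$, leaving only an inequality with a disturbance term; the extra bit removes that complication. Combining the lemma with the trivial attack yields the theorem.
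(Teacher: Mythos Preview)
Your proposal is correct and follows essentially the same route as the paper: the paper starts from a collapsing $F$ with output length $d-1$ and defines $H$ with output length $d$ via $H_k(0^m)=0^d$ and $H_k(x)=1\|F_k(x)$ otherwise, then gives the same trivial $\qepre$ attack and the same singleton/non-singleton case split for the collapsing reduction. The only cosmetic differences are the naming of the output length and that the paper obtains the $F$-target by measuring $F_k$ on the register $X$ rather than by parsing $y$, which is equivalent under the correctness assumption.
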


\paragraph{Proof summary:} If the image $\hat{y}$ of some element $x$
in the domain of $H$ is fixed, regardless of the key $k$, then an
adversary can find a preimage of an element $y$ of his choice easily:
All he must do is choose $y=\hat{y}$, and whatever $k$ the adversary
picks, $x$ is a preimage. But if $x$ is the only preimage of $y$, this
property does not help in creating superpositions of preimages to use
in the collapsing game. So $H$ may still be collapsing.

\begin{proof}
Suppose $F : \{0,1\}^n \times \{0,1\}^m \to \{0,1\}^{d-1}$ is
$\ts$-$\claps$. We define a new function $H : \{0,1\}^n \times
\{0,1\}^m \to \{0,1\}^d$ as follows:
\begin{align*}
H_k(x) =
      \begin{cases}
              0^d       & \text{if } x = 0^m
      \\      1\|F_k(x) & \text{o.w.}
      \end{cases}
\end{align*}

\begin{lemma}\label{CLAPS-ePre-sep-lemma1}
$H$ is $\ts$-$\claps$.
\end{lemma}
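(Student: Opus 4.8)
The plan is to show that the function $H$ defined by $H_k(x) = 0^d$ if $x = 0^m$ and $H_k(x) = 1\|F_k(x)$ otherwise inherits the collapsing property from $F$. The intuition is that $H$ only differs from the (padded) function $1\|F_k(\cdot)$ on the single input point $0^m$, and that singleton difference cannot help an adversary build a ``useful'' superposition of preimages, because for any digest $y$ the preimage set $H_k^{-1}(y)$ differs from the corresponding preimage set of $1\|F_k$ by at most the single element $0^m$ (and only when $y = 0^d$, whose only preimage under $H$ is $0^m$ itself, a classical point).

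First I would let $(A,B)$ be a correct \QPT adversary for $\claps$ on $H$ with advantage $\veps'$, and construct a correct \QPT adversary $(A',B')$ for $\claps$ on $F$. On input $k$, $A'$ runs $A(1^n,k)$ to obtain $S,X,y$ with $\Pr[H_k(\Meas(X)) = y] = 1$. Now split on the classical value $y$: if $y = 0^d$, then correctness of $(A,B)$ forces $X$ to be supported entirely on $0^m$, so $X$ is effectively classical — measuring it changes nothing — and $B$ cannot distinguish $\Game_1$ from $\Game_2$ at all; in this case $A'$ can simply abort (or output a trivial fixed state), losing at most the probability mass on this branch, which contributes nothing to the advantage. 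Otherwise $y = 1\|y'$ for some $y' \in \{0,1\}^{d-1}$; then every element of $H_k^{-1}(y)$ is nonzero and equals $F_k^{-1}(y')$ under the obvious bijection stripping the leading bit. So $A'$ applies the (unitary, reversible) map that strips the leading $1$ from the register encoding the preimage, obtains a register $X'$ with $\Pr[F_k(\Meas(X')) = y'] = 1$, and sends $S, X', y'$ to the $F$-challenger. $B'$ runs $B$ (after restoring the leading bit, or equivalently $B$ can be taken to act on the stripped register directly) and outputs its guess bit.

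The key step is the observation that the leading-bit-stripping operation is a well-defined unitary on the relevant subspace and that it commutes with measurement in the computational basis: measuring $X$ and then stripping yields the same distribution as stripping and then measuring $X'$. Hence when the $F$-challenger runs $\Game_1$ (measures $X'$), the view presented to $(A,B)$ is exactly that of $\Game_1$ on $H$, and when it runs $\Game_2$ (does not measure), the view is exactly $\Game_2$ on $H$. Therefore $\Adv{\claps}{F}(A',B')$ equals $\veps'$ conditioned on the $y \neq 0^d$ branch, and since the $y = 0^d$ branch contributes zero to the $\claps$-distinguishing advantage of $(A,B)$ on $H$ (the two games are identical there), we get $\Adv{\claps}{F}(A',B') \geq \veps'$, whence $\veps' \leq \Adv{\claps}{F} = \negl(n)$, so $H$ is $\ts$-$\claps$ for the claimed parameters.

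The main obstacle I anticipate is handling the $y = 0^d$ branch cleanly and making precise the claim that it carries no distinguishing advantage: one must argue that correctness of $(A,B)$ genuinely forces $X$ onto the classical point $0^m$ on that branch (so $\Game_1$ and $\Game_2$ coincide there), rather than merely forcing the measured value to be $0^m$ with probability $1$ — but these are the same thing, since a register whose computational-basis measurement is deterministic is a computational-basis state. A secondary subtlety is bookkeeping the quantum registers so that $B$ and $B'$ act on consistently-shaped inputs; this is routine and can be absorbed into the reduction by having $B'$ simply prepend the known leading bit before invoking $B$. No approximation or programming is needed here, so the reduction is exact up to the (vacuous) $y = 0^d$ accounting, and the running time of $(A',B')$ is that of $(A,B)$ plus the cost of one controlled bit-flip.
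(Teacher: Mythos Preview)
Your overall strategy matches the paper's: case-split on $y$, observe that $y=0^d$ forces $X=\ket{0^m}$ so that branch contributes zero advantage, and for $y\neq 0^d$ feed the same register $X$ to an $F$-challenger with digest $y'$. However, you have a bookkeeping confusion that you should fix: there is \emph{no} leading bit to strip from the preimage register. Both $H$ and $F$ have domain $\{0,1\}^m$; the extra bit lives only in the \emph{codomain}. So when $y=1\|y'$, one has $H_k^{-1}(y)=\{x\neq 0^m : F_k(x)=y'\}\subseteq F_k^{-1}(y')$, and $A'$ simply forwards $X$ unchanged together with the classically-stripped $y'$. Your ``unitary that strips the leading $1$ from the register encoding the preimage'' does not exist and is not needed; likewise $B'$ need not ``restore'' anything before calling $B$. (Also note the containment is not an equality in general, since $0^m$ could lie in $F_k^{-1}(y')$; but containment is all that correctness of $(A',B')$ requires.)

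The paper's construction differs slightly from yours in a way worth noting: rather than branching on $y$ inside $A'$, the paper has $A'$ compute $y'$ by \emph{measuring} $F_k$ on the register $X$ and sending $(S,X,y')$ regardless of $y$. This makes $(A',B')$ correct by construction in all cases and pushes the case analysis entirely into the advantage calculation. Your version is equally valid once the stray bit-stripping is removed, but the paper's is a touch cleaner since it avoids specifying what $A'$ outputs on the $y=0^d$ branch.
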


The obvious adversary suffices to break $\qepre$ on $H$: $A$ picks
$y=0^d$ and $B$ outputs $0^m$. Thus, the theorem is immediate from
Lemma \ref{CLAPS-ePre-sep-lemma1}.

\end{proof}

\begin{proof}[Proof of Lemma \ref{CLAPS-ePre-sep-lemma1}]
Let $(A,B)$ be a correct \QPT adversary with $\Adv{CLAPS}{H}(A,B) =
\veps'$. We construct a \QPT adversary $(A',B')$ for $F$ as follows:

\grayframe{Constructing $(A',B')$ from $(A,B)$}{
\begin{minipage}[t]{0.48\textwidth}
\noindent $A'(1^n,k)$:
\begin{enumerate}
\item Run $A(1^n,k)$ to get $S,X,y$ with $\Pr[H_k(\Meas(X))=y] = 1$.
\item Measure $\Meas_F(X)$ to get $y'$.
\item Send $S,X,y'$ to the challenger.
\end{enumerate}
\end{minipage}
\hfill
\begin{minipage}[t]{0.48\textwidth}
\noindent $B'(1^n,S,X)$:
\begin{enumerate}
\item Run $B(S,X,y)$ to get $b$.
\item Send $b$ to the challenger.
\end{enumerate}
\end{minipage}
}

$(A',B')$ is correct by construction. We claim that
$\Adv{CLAPS}{F}(A',B') \geq \epsilon$. Consider the following cases:
\begin{enumerate}
\item Suppose $y=0^d$. Then by the premise that $(A,B)$ is correct,
      $X=\ket{0^m}$, since $0^m$ is the only preimage of $0^d$. In this case
      $\Game_1=\Game_2$, since $X=\Meas(X)$, so
      $|\Pr[b=1\:\Game_1]-\Pr[b=1\:\Game_2]|=0$ for both $(A,B)$ and $(A',B')$.
\item Suppose $y\not=0^d$, and the challenger for $(A,B)$ is running $\Game_1$.
      Then $B$ receives $\Meas(X)$, so $B$ sees $\Game_2$.
\item Suppose $y\not=0^d$, and the challenger for $(A,B)$ is running $\Game_2$.
      By the premise that $(A,B)$ is correct, when $A$ produces $X$,
      \[X = \sum_{x \in H^{-1}(y)} \alpha_x \ket{x}
          = \ket{1} (\sum_{x \in F^{-1}(y')} \alpha_x \ket{x})\]
      So the measurement at step 2 of $A'$ does not collapse $X$. Hence, $B$
      sees $\Game_2$.
\end{enumerate}
Therefore $\Adv{CLAPS}{F}(A',B') = \veps'$.
\end{proof}

\section{$\qapre \equiv \sqapre$}
\label{sec:aPre-aPreSQ-equiv}

\begin{theorem}\label{aPre-aPreSQ-equiv}
Equations \ref{adv-aPre} and \ref{adv-aPreSQ} are equivalent. I.e.,
$\Adv{\qapre}{H} = \Adv{\sqapre}{H}$
\end{theorem}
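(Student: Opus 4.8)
The inequality $\Adv{\qapre}{H} \le \Adv{\sqapre}{H}$ is essentially by inspection, so I would dispatch it first: the $\qapre$ game is the special case of the $\sqapre$ game in which $A$ prepares its key register in a computational-basis state and $B$ opens by measuring its digest register. Concretely, from a $\qapre$ adversary $(A,B)$ one builds an $\sqapre$ adversary that runs $A$ to obtain a classical key $k$ and state register $S$ and hands the challenger $K=\ket k$ together with $S$; then $Y = U_{H(x')}(\ket k \otimes \ket 0)$ is the basis state $\ket{H_k(x')}$, so the second stage may read off $y$ by measuring $Y$ without disturbance, run $B(1^n,S,y)$, and return $X=\ket x$. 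The closing measurement of $(K,X)$ yields $(k,x)$ with certainty, so the advantage is unchanged. The content of the theorem lies in the reverse inequality.

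For $\Adv{\sqapre}{H} \le \Adv{\qapre}{H}$ the plan is to argue that, in the $\sqapre$ game, the measurement of the key register $K$ at the end can be commuted all the way back to the instant $A$ outputs $K$, leaving the output distribution untouched. Two facts drive this. First, the only coherent operation touching $K$ after $A$ releases it is $U_{H(x')}$, and by definition $U_{H(x')} = \sum_k \kera{k}\otimes V_k$ with $V_k \: \ket y \mapsto \ket{y\oplus H_k(x')}$; being block-diagonal in the computational basis of $K$, it commutes with a computational-basis measurement of $K$. Second, the adversary's second stage acts only on $S$, on the digest register $Y$, and on its output register $X$---never on $K$---and a measurement of one register always commutes with a channel acting on disjoint registers. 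I would isolate the first fact as a short lemma (``a controlled unitary commutes with a computational-basis measurement of its control register''); this is precisely the ingredient the analogous $\qasec$, $\qepre$, and $\qesec$ arguments can do without. Putting the two facts together, inserting a computational-basis measurement of $K$ immediately after $A$ yields exactly the same joint distribution on the final $(k,x)$, and on $y$ (which after that measurement is the deterministic value $H_k(x')$), as the original game.

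Once $K$ is measured right after $A$, the interface is classical at both ends: $A$ effectively outputs a classical key $k$ together with the quantum register $S$, and after $U_{H(x')}$ the register $Y$ is the basis state $\ket{H_k(x')}$, so delivering $Y$ to the second stage is the same as delivering the classical digest $y=H_k(x')$, while measuring the output register $X$ at the end is the same as the second stage returning a classical $x$. Reading the resulting process as a $\qapre$ adversary $(A',B')$---with $A'$ running $A$ and measuring its key wire, and $B'$ running $B$ on $\ket y$ and measuring the output register---gives $\Adv{\qapre}{H}(A',B') = \Adv{\sqapre}{H}(A,B)$; both transformations add only a basis-state preparation and a measurement, so maximizing over \QPT adversaries yields $\Adv{\qapre}{H} = \Adv{\sqapre}{H}$. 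I expect the main obstacle to be the bookkeeping in this commutation step: because $Y$ reaches the second stage entangled with $K$, one must phrase everything as channels and measurements on explicitly named registers so that ``measurements commute past disjoint operations'' and ``controlled unitaries commute past measurement of the control'' can be applied cleanly; drawing the two games as explicit quantum circuits and sliding the $K$-measurement leftward gate by gate is how I would make this rigorous.
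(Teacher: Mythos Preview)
Your proposal is correct and follows essentially the same argument as the paper: the paper also establishes the equivalence by sliding the computational-basis measurement of $K$ leftward, first past $B$ (which acts only on $S,Y,X$) and then past $U_{H(x')}$ (which is diagonal in $K$), presenting this as a chain of three equivalent games with the two commutation steps isolated as Lemmas~\ref{aPre-aPreSQ-equiv-lemma1} and~\ref{aPre-aPreSQ-equiv-lemma2}. Your block-diagonal description $U_{H(x')}=\sum_k\kera{k}\otimes V_k$ is a cleaner way to state what the paper proves by decomposing $U_{H(x')}$ into $\CNOT$s controlled on $K$ interleaved with unitaries on $Y$; the paper does not separately spell out the easy inclusion $\Adv{\qapre}{H}\le\Adv{\sqapre}{H}$, but otherwise the proofs coincide.
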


\begin{figure}
\centering

\subfloat[b][$\Game_1$]{
\resizebox{0.26\textwidth}{!}{
\Qcircuit @C=0.7em @R=1.0em {
        \lstick{\ket{0^n}} & \gate{A} & \ustick{K}\qw & \multigate{1}{U_{H(x')}} &           \qw & \qw              & \meter & \qw & k
\\      \lstick{\ket{0^d}} & \qw      &           \qw & \ghost{U_{H(x')}}        & \ustick{Y}\qw & \multigate{1}{B} & \meter & \qw & y
\\      \lstick{\ket{0^m}} & \qw      &           \qw & \qw                      &           \qw & \ghost{B}        & \meter & \qw & x
}
\label{aPreSQ-game1}
}}
\hspace{1em}\raisebox{-5mm}{\large $\equiv$}\hspace{1.3em}
\subfloat[b][$\Game_2$]{
\resizebox{0.26\textwidth}{!}{
\Qcircuit @C=0.7em @R=1.0em {
        \lstick{\ket{0^n}} & \gate{A} & \ustick{K}\qw & \multigate{1}{U_{H(x')}} & \meter        & \qw              & \qw    & \qw & k
\\      \lstick{\ket{0^d}} & \qw      &           \qw & \ghost{U_{H(x')}}        & \ustick{Y}\qw & \multigate{1}{B} & \meter & \qw & y
\\      \lstick{\ket{0^m}} & \qw      &           \qw & \qw                      & \qw           & \ghost{B}        & \meter & \qw & x
}
\label{aPreSQ-game2}
}}
\hspace{1em}\raisebox{-5mm}{\large $\equiv$}\hspace{1.3em}
\subfloat[b][$\Game_3$]{
\resizebox{0.26\textwidth}{!}{
\Qcircuit @C=0.7em @R=1.0em {
        \lstick{\ket{0^n}} & \gate{A} & \ustick{K}\qw & \meter  & \multigate{1}{U_{H(x')}} & \qw           & \qw              & \qw    & \qw & k
\\      \lstick{\ket{0^d}} & \qw      &           \qw & \qw     & \ghost{U_{H(x')}}        & \ustick{y}\qw & \multigate{1}{B} & \meter & \qw & y
\\      \lstick{\ket{0^m}} & \qw      &           \qw & \qw     & \qw                      & \qw           & \ghost{B}        & \meter & \qw & x
}
\label{aPreSQ-game3}
}}

\caption{
Three equivalent games showing that measuring the $K$ wire before the
end does not change the $\sqapre$ game---hence $\sqapre$ and $\qapre$
are equivalent. Figure \ref{aPreSQ-game1} is the same as $\sqapre$;
Figure \ref{aPreSQ-game3} is functionally equivalent to $\qapre$; and
Figure \ref{aPreSQ-game2} is intermediate between the two. The state
register $S$ from Equations \ref{adv-aPre} and \ref{adv-aPreSQ} is
omitted for clarity.
}
\label{aPreSQ-games}
\end{figure}
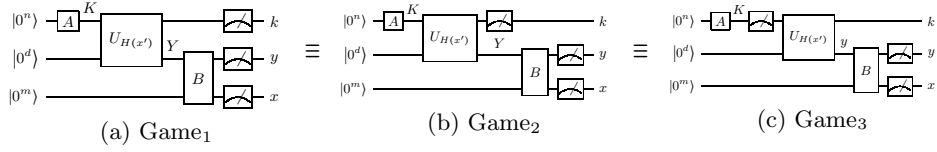

\begin{proof}[Proof of Theorem \ref{aPre-aPreSQ-equiv}]

Let each part $A$, $B$ of the adversary be a unitary operator and let
$n = \ceil{\lg |\ksp|}$ be the size of a key, $m = \ceil{\lg |\msg|}$
be the size of a message, and $d = \ceil{\lg |\dig|}$ be the size of a
digest. We illustrate the circuits for three games in Figure
\ref{aPreSQ-games}.

\begin{lemma}\label{aPre-aPreSQ-equiv-lemma1}
Game 1 is equivalent to Game 2.
\end{lemma}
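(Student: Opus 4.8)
The plan is to argue that the measurement of the $K$ wire can be commuted past the application of the $B$ gate, since $B$ does not touch the $K$ register. In Game~1 (Fig.~\ref{aPreSQ-game1}), the $K$ wire is measured \emph{after} $B$ has acted on the $Y$ and $X$ registers; in Game~2 (Fig.~\ref{aPreSQ-game2}), it is measured \emph{immediately after} $U_{H(x')}$ and before $B$. Because $B$ acts only on $Y$ and $X$ (together with the state register $S$, suppressed in the figure), the unitary implementing $B$ commutes with the projective measurement $\Meas$ in the computational basis on $K$: writing $B$ as a unitary on $Y\otimes X\otimes S$ and extending by identity on $K$, we have $(\Id_K\otimes B)(\Meas_K\otimes\Id_{YXS}) = (\Meas_K\otimes\Id_{YXS})(\Id_K\otimes B)$ as quantum channels.

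First I would fix notation: let $\rho$ be the joint state on $K\otimes Y\otimes X$ immediately after $U_{H(x')}$ is applied (with $S$ carried along implicitly), and let $\mathcal{E}_K$ denote the channel ``measure $K$ in the computational basis,'' $\mathcal{E}_K(\sigma) = \sum_k (\kera{k}\otimes\Id)\,\sigma\,(\kera{k}\otimes\Id)$. Then the output of Game~1 before the final $Y,X$ measurements is $\mathcal{E}_K\big((\Id_K\otimes B)\,\rho\,(\Id_K\otimes B)^\dagger\big)$, while the output of Game~2 at the same stage is $(\Id_K\otimes B)\,\mathcal{E}_K(\rho)\,(\Id_K\otimes B)^\dagger$. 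Since $(\Id_K\otimes B)$ and $\mathcal{E}_K$ act on disjoint tensor factors, these two expressions are equal; applying the identical final measurements of $Y$ and $X$ to both gives the same distribution over $(k,y,x)$, hence the same winning probability. Therefore the two games are equivalent.

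I do not expect a genuine obstacle here — this is the ``easy'' commutation step that the excerpt flags as not requiring Lemma~\ref{aPre-aPreSQ-equiv-lemma2}. The only point needing minor care is bookkeeping around the shared state register $S$: one must note that $S$ is passed from $A$ to $B$ but is never acted on by the $K$-measurement, so it can be absorbed into the ``$B$ acts on everything except $K$'' description without affecting the argument. The genuinely subtle part of the overall theorem — showing that pushing the $K$-measurement back \emph{before} $U_{H(x')}$ (Game~2 $\equiv$ Game~3) does not change anything, which requires understanding how measuring the control register interacts with the controlled operation $U_{H(x')}$ — is deferred to the subsequent lemma and is not needed here.
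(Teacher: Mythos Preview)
Your proposal is correct and matches the paper's own proof: both argue that measuring $K$ commutes with applying $B$ because $B$ acts as the identity on the $K$ register, i.e., $(\Id_K\otimes B)$ commutes with each projector $\kera{k}\otimes\Id$. Your write-up is somewhat more detailed than the paper's two-line version, but the content is identical.
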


\begin{lemma}\label{aPre-aPreSQ-equiv-lemma2}
Game 2 is equivalent to Game 3.
\end{lemma}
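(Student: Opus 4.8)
To prove Lemma~\ref{aPre-aPreSQ-equiv-lemma2}, observe that the only difference between Games~2 and~3 is the relative order of two operations on the key register $K$: in Game~2 we apply the gate $U_{H(x')}$ and then measure $K$, whereas in Game~3 we measure $K$ and then apply $U_{H(x')}$. Everything preceding this pair of operations---the call to $A$, which produces $K$ and the side register $S$ alongside the fixed inputs $\ket{0^d}$ and $\ket{0^m}$---and everything following it---the call to $B$ on the digest and message registers (together with $S$), followed by the terminal measurements of those registers---is literally the same in the two circuits. So the plan is to argue that a computational-basis measurement of $K$ commutes with $U_{H(x')}$; once the joint state entering $B$ is seen to be identical in the two games, equality of the induced output distributions on $(k,y,x)$, and hence equality of $\Adv{\sqapre}{H}(A,B)$ as computed in either game, follows by composing with the common tail of the circuit.

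Concretely, I would first record the block-diagonal form of $U_{H(x')}$ with respect to the computational basis of $K$,
\[
  U_{H(x')} \;=\; \sum_{k\in\ksp} \kera{k}_K \otimes V_k,
  \qquad V_k:\ket{y}\mapsto\ket{y\oplus H_k(x')},
\]
which is immediate from its defining action $\ket{k}\ket{y}\mapsto\ket{k}\ket{y\oplus H_k(x')}$. The point is that $U_{H(x')}$ commutes with every projector $\kera{k}_K\otimes\Id$, hence with the dephasing channel $\Meas_K(\rho)=\sum_k(\kera{k}_K\otimes\Id)\,\rho\,(\kera{k}_K\otimes\Id)$ modelling the measurement of $K$; that is,
\[
  \Meas_K\bigl(U_{H(x')}\,\rho\,U_{H(x')}^{\dagger}\bigr) \;=\; U_{H(x')}\,\Meas_K(\rho)\,U_{H(x')}^{\dagger}
\]
for every $\rho$. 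This is the familiar fact that a measurement on a control wire may be pushed to either side of a gate that is classically controlled by that wire; in passing it also explains why the digest wire leaving $U_{H(x')}$ is drawn classical in Game~3 (having measured $k$, that register now carries $y = H_k(x')$).

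Instantiating this identity with $\rho$ the state prepared by $A$ (extended by $\ket{0^m}$ on the message register) shows that the state handed to $B$ is the same in Game~2 and Game~3; composing with the identical action of $B$ and the identical final measurements then yields $\Pr[(k,y,x)\:\Game_2]=\Pr[(k,y,x)\:\Game_3]$ for every triple, which is the claim. I do not expect a genuine obstacle here: the only care required is the bookkeeping of which register each gate touches, together with the observation that $U_{H(x')}$ neither modifies $K$ nor reads it outside the computational basis---exactly the properties that license the commutation. (The complementary reordering, moving the $K$-measurement across $B$, is what Lemma~\ref{aPre-aPreSQ-equiv-lemma1} handles, there relying instead on the fact that $B$ does not act on $K$ at all.)
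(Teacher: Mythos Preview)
Your proof is correct, and the underlying idea matches the paper's: the only thing to check is that a computational-basis measurement on $K$ commutes with $U_{H(x')}$. The route you take to that commutation differs from the paper's, however. You write $U_{H(x')}$ directly in block-diagonal form $\sum_k \kera{k}_K\otimes V_k$ and observe that this manifestly commutes with each projector $\kera{k}_K\otimes\Id$, hence with the dephasing channel. The paper instead argues at the gate level: since $U_{H(x')}$ leaves $K$ intact, it can be realized as a product of $\CNOT$ gates (with controls in $K$ and targets in $Y$) interleaved with unitaries acting only on $Y$; commutation with $\Meas_n$ then reduces to the familiar fact that measuring a control wire commutes with $\CNOT$, together with the trivial case of a local unitary on $Y$. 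Your decomposition is cleaner and avoids the auxiliary circuit-synthesis step; the paper's buys a slightly more elementary account that bottoms out in a single well-known gate identity.
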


We claim that the theorem follows from Lemmas
\ref{aPre-aPreSQ-equiv-lemma1} and \ref{aPre-aPreSQ-equiv-lemma2}.
Note that Game 1 is exactly the $\sqapre$ game, as defined in Equation
\ref{adv-aPreSQ}. In Game 3, the output of $A$ and $B$, with the
exception of the state register $S$, are measured immediately, so
without loss of generality, we may assume that their output is
classical. Thus Game 3 is equivalent to the $\qapre$ game (Equation
\ref{adv-aPre}).

\end{proof}

In the following two proofs, let $\Meas_n$ denote measuring the first
$n$ qubits in the standard basis and leaving the rest untouched. In
other words,
\begin{equation}
\Meas_n(\rho) = \sum_{x\in\bit^n}(\kera{x} \otimes \Id)\rho(\kera{x} \otimes \Id)
\label{eqn:measurement}
\end{equation}

\begin{proof}[Proof of Lemma \ref{aPre-aPreSQ-equiv-lemma1}]

It suffices to show that for any unitary $B$ and all $x\in\bit^n$,
$(\kera{x} \otimes \Id)$ commutes with $(\Id_n \otimes B)$. This is
immediate, since $(\Id_n \otimes B)(\kera{x} \otimes \Id) = \kera{x}
\otimes B$.

\end{proof}

\begin{proof}[Proof of Lemma \ref{aPre-aPreSQ-equiv-lemma2}]

We must show that $U_{H(x')}$ commutes with $\Meas_n$. Since
\begin{align*}
U_{H(x')} \: \ket{K}\ket{Y} \mapsto \ket{K}\ket{H_k(x') \oplus Y}
\end{align*}
is a unitary operator that leaves the $K$ register untouched, it can
be viewed as using the $K$ register solely as control bits for $\CNOT$
gates, interspersed with unitaries on the $Y$ register. Let
$\CNOT_{i,j}$ denote a $\CNOT$ gate with control $i$ and target $j$,
and $V$ denote an arbitrary unitary that acts on the $Y$ register.
Then for $0 \leq i_\ell \leq n$ and $n+1 \leq j_\ell \leq n+d$ for all
$0 \leq \ell \leq q]$,

\begin{equation}
U_{H(x')} = (\Id^n \otimes V_0) \prod_{\ell=1}^q \CNOT_{i_\ell,j_\ell} (\Id^n \otimes V_\ell)
\label{eqn:controlled-unitary}
\end{equation}

Given the definition of $\Meas_n$ in Equation~\ref{eqn:measurement}, we must
show all the factors in Equation~\ref{eqn:controlled-unitary} commute with
$\kera{x} \otimes \Id_d$. Clearly this is the case for $\Id^n \otimes V_\ell$,
by the same reasoning as in the proof of Lemma~\ref{aPre-aPreSQ-equiv-lemma1}.
Similarly, it is well known that measurement of the control qubit commutes with
$\CNOT$.

\end{proof}

\section{ROX constructions}
\label{sec:ROX-lemmas}

\subsection{Extracting compression-function collisions
            (Lemma~\ref{ROX-collision-extraction})}
\begin{proof}[Proof of Lemma \ref{ROX-collision-extraction}]
We claim that the following procedure extracts a compression-function
with overwhelming probability:

\grayframe{\procedure{Extract-Collision}{k,\hat{x},\hat{x}'}}{
\begin{enumerate}
\item Let $\range{b}{1}{\ell}=\roxpad(\hat{x})$ and
      $\range{b'}{1}{\ell'}=\roxpad(\hat{x}')$.
\item Let $x_i=b_i\|\ROXp{H}_k(\range{b}{1}{i-1})$ and
      $x'_i=b'_i\|\ROXp{H}_k(\range{b'}{1}{i-1})$.
\item For $i=0$ to $\min(\ell,\ell')$ \dots
\item If $x_{\ell-i} \not= x'_{\ell'-i}$, output
      $(x=x_{\ell-i},x'=x'_{\ell'-i},i^*=\ell-i)$.
\end{enumerate}
}

Let $\bar{x},\bar{x}'$ and be the first $n$ bits of
$\hat{x},\hat{x}'$, and $\lambda,\lambda'$ be their respective
lengths. We must show that there always exists a colliding pair
$(x,x')$. We consider two cases:

\begin{itemize}

\item \emph{Case i.} Suppose $\bar{x}\not=\bar{x}'$ or
$\lambda\not=\lambda'$. Then since $x_\ell$ and $x_{\ell'}'$ each
contains at least one full output from $\RO_2$,
$x_\ell\not=x_{\ell'}'$ except with probability
$\delta=\frac{1}{2^n}$. In this case the inputs to the last
application of $H_k$ form a collision for $H_k$.

\item \emph{Case ii.} Otherwise, $\bar{x}=\bar{x}'$ and
$\lambda=\lambda'$. In this case, the padding applied to $\hat{x}$ and
$\hat{x}'$ will be identical. But the mask schedule taken from $\RO_1$
will be identical as well. So since $\hat{x}\not=\hat{x}'$, there must
be block pair, $(b_{i^*},b'_{i'^*})$ on which they differ. Since the
masks and padding match, these form a collision for $H_k$.

\end{itemize}

\end{proof}

\subsection{Embedding messages (Lemma~\ref{ROX-message-embedding})}
\begin{proof}[Proof of Lemma \ref{ROX-message-embedding}]
Let $h\|g = x$, where $|h| = b$ and $|g| = d$, and define the
following procedure:

\grayframe{\procedure{Embed-Message}{x,i}}{
\begin{enumerate}
\item Generate a random message $\hat{x}$ of length $\lambda \geq bi$.
\item If $i=1$, let $\bar{x}$ be the first $n$ bits of $x$, adding
      bits from $\hat{x}$, starting with the $(m+1)$st, if $x$ isn't
      long enough. Otherwise, let $\bar{x}$ be the first $n$ bits of
      $\hat{x}$.
\item Let $h_1\|\dots\|h_\ell = \roxpad(\hat{x})$ with $|h_j|=b$.
\item Evaluate $\mu_i = \ROXp{H}_k(x_1\|\dots\|x_{i-1})$.
\item Program $\RO_1(\bar{x},k,\nu(i))$ with $g \oplus \mu_i$.
      \label{step:programming1}
\item Program the $q'\leq q_2(\hat{x})$ outputs of $\RO_2$ contained
      in $h_i$ with $x$.
      \label{step:programming2}
\item Let $\hat{x}'$ be the first $\lambda$ bits of
      $h_1\|\dots\|h_{i-1}\|h\|h_{i+1}\|\dots\|h_\ell$.
\item Output $\hat{x}'$.
\end{enumerate}
}

In steps~\ref{step:programming1} and \ref{step:programming2}, the
above procedure requires us to program a random oracle. To do so, we
invoke witness search from~\cite{ES15}, where a witness is some image
of $\RO_{1,2}$ corresponding to an input that starts with $\bar{x}$.
Since $\bar{x}$ is chosen at random, and since the codomains of
$\RO_{1,2}$ are much larger than their domains, the random search
problem in~\cite{HRS16} can be reduced to this, with $2^n$ marked
items in a set of $2^{b-n}\geq2^n$, so the success probability is
$O(q^2/2^n)$.

\end{proof}

\subsection{Extracting compression-function preimages
            (Lemma~\ref{ROX-preimage-extraction})}
\begin{proof}[Proof of Lemma \ref{ROX-preimage-extraction}]
We claim that the following procedure extracts a collision-function
preimage with overwhelming probability:

\grayframe{\procedure{Extract-Preimage}{k,\hat{x}}}{
\begin{enumerate}
\item Evaluate $\ROX{H}_k(x)$ up to the last application of
$H_k$. Namely let
\begin{align*}
        & \range{x}{1}{\ell} = \roxpad(x)
\\      & x = x_\ell \| \ROXp{H}_k(\range{x}{1}{\ell-1})
                                \oplus \RO_1(\hat{x},k,\ell)
\end{align*}
\item Output $x$.
\end{enumerate}
}

By construction, $H_k(x)=y$ as desired. The only calls to $H$ and
$\RO_1,\RO_2$ are in the partial computation of $\ROX{H}(x)$. Since we
omit one call to $H$, the procedure calls it $\ell(x)-1$ times.

\end{proof}

\section{ROX property-preservation proofs}
\label{sec:ROX-pres}

\subsection{ROX preserves $\qapre$}
\begin{proof}[Proof of Theorem \ref{ROX-aPre-pres}]
Let $(A,B)$ be a $(t,\veps)$ quantum adversary for $\qapre$ on
$\ROX{H}$, making $q=\poly(n)$ oracle queries. We construct an
adversary $(A',B')$ for $H$ using an additional $\poly(n)$ oracle
queries: 
\grayframe{Constructing $(A',B')$ from $(A,B)$}{
\begin{minipage}[t]{0.48\textwidth}
\noindent $A'(1^n)$:
\begin{enumerate}
\item Run $k,S \from A^{\RO_{1,2}}(1^n)$, simulating quantum oracles $\RO_{1,2}$.
\item Output $k,S$.
\end{enumerate}
\end{minipage}
\hfill
\begin{minipage}[t]{0.48\textwidth}
\noindent $B'(1^n,S,y)$:
\begin{enumerate}
\item Run $x \from B^{\RO_{1,2}}(1^n,S,y)$.
\item Run \procedure{Extract-Preimage}{k,x} to obtain a preimage $x'$
      for $y$ under $H_k$.
\item Output $x'$.
\end{enumerate}
\end{minipage}
}

By Lemma \ref{ROX-preimage-extraction}, if $\ROX{H_k}(x)=y$, then
$H_k(x')=y$, and $A'$ wins the \qapre-game. Note that $y=H_k(g'\|h')$
for $g'\|h'$ chosen at random, while $A$ would expect a $\hat y$ to be
$H_k(g\|h)$, where $g$ contains at least $2b$ bits of
$\RO_2(\bar{x},\cdot)$, and $h=d\oplus\RO_1(\bar{x},\cdot)$ for some
$d$. The view of $(A,B)$ in the simulated run in $(A',B')$ is thus
identical to the real \qapre-game, unless $(A,B)$ can distinguish $y$
and $\hat y$ using at most $q$ queries. We show that $A$ can
distinguish them with probability at most $q^2/{2^n}$. Hence
$\Adv{\qapre}{H}(A',B') \geq \veps-q^2/2^n$.

We argue that if some challenger that knew $x$ were to reprogram
$\RO_{1,2}$ on inputs corresponding to $x$, no algorithm would be able
to discover this except with negligible probability. In the
Witness-Search game from \cite{ES15}, let $P(w)$ output 1 if and only
if $\RO_2(w,|x|,j)=g$ and
$\ROXp{H}_k(x_1\|\dots\|x_\ell)\oplus\RO_1(w,k,i)=h$ for some
$1\leq~i,j\leq|x|$. Next, let $w=\bar{x}$ and $pk=(k,|x|)$. This
amounts to finding a preimage with a suffix from a set in a random
function. Hence $\Adv{WS}{\Samp}(A) \leq O(q^2 /{2^d})$ by reducing a
random search problem developed in~\cite{HRS16} to it. Thus we can
safely reprogram $\RO_{1,2}$ at points corresponding to $P$ being
true, and $h,g$ are indistinguishable from the random values supplied
by $B'$.

\end{proof}

\subsection{ROX preserves $\qasec$}
\begin{proof}[Proof of Theorem \ref{ROX-aSec-pres}]
Let $(A,B)$ be a $(t,\veps)$ adversary for $\qasec$ on $\ROX{H}$
making $q=\poly(n)$ oracle queries. We construct an adversary
$(A',B')$ for $H$, using an additional
$\ceil{\frac{b+2n-1}{2n}}+\poly(n)$ oracle queries:

\grayframe{Constructing $(A',B')$ from $(A,B)$}{
\begin{minipage}[t]{0.48\textwidth}
\noindent $A'(1^n)$:
\begin{enumerate}
\item Run $k,S \from A^{\RO_{1,2}}(1^n)$, simulating quantum oracles
      $RO_{1,2}$.
\item Output $k,S$.
\end{enumerate}
\end{minipage}
\hfill
\begin{minipage}[t]{0.48\textwidth}
\noindent $B'(1^n,S,x)$:
\begin{enumerate}
\item Choose an index $i \leq \poly(n)$.
\item Run \procedure{Embed-Message}{x,i} to get
      $\hat{x}\in\dom(\ROX{H}_k)$ with $x$ embedded as the input to
      the $i$th application of $H_k$.
\item Run $B^{\RO_{1,2}}(1^n,S,\hat{x}')$, to get $\hat{x}'$.
\item Run \procedure{Extract-Collision}{k,\hat{x},\hat{x}'} to get
      $(x,x',i^*)$.
\item If $i^*\not=i$, FAIL. Output $x'$.
\end{enumerate}
\end{minipage}
}

By Lemma~\ref{ROX-message-embedding}, \textsc{Embed-Message} adds an
additional $i=\poly(n)$ applications of $H$ and an additional
$\ceil{\frac{b+2n-1}{2n}}+i$ oracle queries and alters the success
probability of $A$ by at most $O(q/2^n)=\poly(n)/2^n$, where
$q=\poly(n)$ is the number of queries $A$ makes. By
Lemma~\ref{ROX-collision-extraction}, \textsc{Extract-Collision} adds
$\ell(\hat{x})+\ell(\hat{x}')=\poly(n)$ applications of $H$ and
$q(\hat{x})+q(\hat{x}')=\poly(n)$ oracle queries and fails w.p.
$\frac{1}{2^n}$. Assuming both succeed, $i=i^*$ w.p.
$\frac{1}{\poly(n)}$. Hence $\Adv{\qasec}{H}(A',B') \geq
\veps(1-q/2^n)(1-1/2^n)/\poly(n)$.

\end{proof}

\end{document}